\newtheorem{thm}{Theorem}[section]
\newtheorem{prop}[thm]{Proposition}
\newtheorem{lemma}[thm]{Lemma}
\theoremstyle{definition}
\newtheorem{definition}[thm]{Definition}
\newtheorem{remark}[thm]{Remark}
\newtheorem{assumption}[thm]{Assumption}
\newtheorem{example}[thm]{Example}
\newcommand{\bt}{\begin{thm}}
\newcommand{\et}{\end{thm}}
\newcommand{\br}{\begin{remark}}
\newcommand{\er}{\end{remark}}
\newcommand{\bl}{\begin{lemma}}
\newcommand{\el}{\end{lemma}}
\newcommand{\bp}{\begin{proof}}
\newcommand{\ep}{\end{proof}}
\newcommand{\bal}{\begin{align*}}
\newcommand{\eal}{\end{align*}}
\newcommand{\bi}{\begin{itemize}}
\newcommand{\be}{\begin{equation}}
\newcommand{\ee}{\end{equation}}
\newcommand{\bea}{\begin{eqnarray}}
\newcommand{\eea}{\end{eqnarray}}
\newcommand{\ba}{\begin{align*}}
\newcommand{\ea}{\end{align*}}
\newcommand{\ei}{\end{itemize}}
\newcommand{\R}{\mathbb{R}}
\newcommand{\cF}{\mathcal{F}}
\newcommand{\cA}{\mathcal{A}}
\newcommand{\E}{\mathrm{E}}
\numberwithin{equation}{section}
\newcommand{\diff}{\mathrm{d}}
\begin{document}
\title{Robust utility maximisation under proportional transaction costs for c{\`a}dl{\`a}g price processes\footnote{We thank Huy Chau for his comments on an earlier version of the paper.}}
\author{Christoph Czichowsky\footnote{Department of Mathematics, London School of Economics and Political Science, Columbia House, Houghton Street, London WC2A 2AE, UK, {\tt c.czichowsky@lse.ac.uk}.} 
\hspace{20pt}Raphael Huwyler\footnote{Department of Mathematical and Statistical Sciences, University of Alberta, Edmonton AB T6G 2N8, Canada, {\tt huwyler@ualberta.ca}.}
\\
\\
\today.
}
\date{}
\maketitle

\begin{abstract}
\noindent
We consider robust utility maximisation in continuous-time financial markets with proportional transaction costs under model uncertainty. For this purpose, we work in the framework of \citet{chau2019}, where robustness is achieved by maximising the worst-case expected utility over a possibly uncountable class of models that are all given on the same underlying filtered probability space. In this setting, we give sufficient conditions for the existence of an optimal trading strategy, extending the result for utility functions on the positive half-line of \citet{chau2019} from continuous to general strictly positive c{\`a}dl{\`a}g price processes and from complete to incomplete filtrations. Our result allows us to provide a positive answer to an open question pointed out in \citet{chau2019}, and shows that the embedding into a countable product space is not essential.
\end{abstract}
\noindent
\textbf{Key words: }Utility maximisation, proportional transaction costs, model uncertainty, incomplete filtrations.

\section{Introduction}
Maximising the expected utility from terminal wealth is a classical problem in Mathematical Finance and Financial Economics. Recently, there has been a lot of interest in robust utility maximisation under model uncertainty, where one maximises the worst-case expected utility over a class of models. The motivation for this viewpoint is that the resulting trading strategies are less sensitive to changes of the underlying model and in this sense more robust to model misspecification. 

In this paper, we consider robust utility maximisation under proportional transaction costs in the framework of model uncertainty of \citet{chau2019}. Here, the worst-case expected utility over a possibly uncountable class of models, which are all given on the same underlying filtered probability space, is maximised. In this setting, we extend the existence result of \citet{chau2019} for utility functions on the positive half-line from continuous to general strictly positive c\`adl\`ag price processes, and from complete to incomplete filtrations. This work answers an open question in \citet{chau2019}. Without frictions, a rich amount of examples of discrete-time models for this setting of model uncertainty has been proposed in \citet{rasonyi2021}. Our results cover the corresponding models under proportional transaction costs and allow to consider both discrete- and continuous-time models in a unified framework.
 
As already explained in \citet{chau2019}, most of the literature typically parameterises model uncertainty by a family $\mathbb{P}$ of probability measures that are given on some underlying canonical probability space (see, e.g.,~\citet{biagini2017}, \citet{lin2021} and \citet{neufeld2018}). This means that the dynamics of the risky asset is given by a fixed process and model uncertainty is described by a family of different distributions of this process. The discrete-time case with unbounded utility functions defined on $(0,\infty)$ has been solved in \citet{blanchard2018}. In the context of transaction costs, the discrete-time case with transaction costs and exponential utility preferences has been treated in \citet*{deng2020}. For diffusion models, drift uncertainty can be modelled by considering a family of absolutely continuous measures that are dominated by a single measure $P^{\ast}$ (see e.g., \citet{quenez2004} and \citet{schied2006}), while the case of volatility uncertainty requires an uncountable family of singular measures (see e.g., \citet{denis2013}).

In contrast to the approach above, \citet{chau2019} propose a setup of model uncertainty, where different stock price processes are considered. That is, they suggest to work on a fixed filtered probability space and to use a family of stochastic processes $S^\theta$ indexed by $\theta$ in a non-empty set $\Theta$ to describe model uncertainty. From a mathematical point of view, the main advantage of this setup is that no topological or measurability assumptions are needed on the set of parameters $\Theta$ representing the different models. In contrast, the parametrisation via a family of measures $\mathbb{P}$ incorporates technical issues such as the treatment of null events and filtration completion (see e.g., \citet{biagini2017b}, \citet{bouchard2015} and \citet{nutz2016}). However, since typical examples consider an uncountable set of models $\Theta$, one cannot complete the filtration with respect to the null sets arising from any price process $S^\theta$ and has to work with incomplete filtrations. Working with filtrations under ``unusual'' conditions, that is, without the usual condition of completeness, then brings its own challenges. This additional difficulty is in contrast to \citet{chau2019}, who still assume the filtration to be complete. Besides this fact, we refer to the original paper of \citet{chau2019} for a more detailed comparison between these two approaches to model uncertainty.

Within their setting of model uncertainty, \citet{chau2019} observe that, similarly as in the case of a single model in \citet{guasoni2002}, it is more suitable to optimise directly over trading strategies, and hence stochastic processes, rather than over terminal wealths given by random variables as in the classical case with only one price process. For this purpose, they  exploit that, for continuous price processes, it is sufficient to model trading strategies by c\`adl\`ag finite variation processes under proportional transaction costs. The key insight is that c\`adl\`ag finite variation processes can be identified with their values along the rational numbers and hence objects that are taking values in the countable product of complete metric spaces. Nevertheless, this approach does not avoid the measurability problem arising from working with uncountably many models simultaneously, if one does not assume the filtration to be complete.

While c\`adl\`ag strategies are sufficient to obtain the optimal strategies for continuous price processes under proportional transaction costs, this fact is no longer true for price processes with jumps; see Example 4.1 in \citet{czichowsky2016}. For c{\`a}dl{\`a}g price processes, it matters, whether one is trading immediately before, just at, or immediately after a jump. Therefore, trading strategies have to be modelled by general predictable finite variation processes that can have left and right discontinuities and can no longer be identified with their values along the rationals. To overcome this difficulty, we use a version of Helly's Theorem of \citet{campi2006}. The latter allows us to obtain a sequence of trading strategies that converges to a c\`adl\`ag modification of a suitable limit process at all time points except for the discontinuity points of that modified limit process. The key is now that we can show that the set, where the convergence can fail, is the same for all models. Hence, it can be exhausted by countably many stopping times. The stopping times can be obtained by using a suitable version of the Debut Theorem for filtrations that are not complete. This insight allows us to achieve the convergence at these points as well by a diagonal sequence argument. Mathematically speaking, while \citet{chau2019} work with the topology of $P$-a.s. convergence along all rational times on the set of c{\`a}dl{\`a}g finite variation processes, we work with the topology of convergence in probability at all $[0,T]$-valued stopping times. Somewhat surprisingly, our result shows that the embedding into a countable product metric space indexed by the rationals as used in \citet{chau2019} is not essential.

We assume that our utility functions have a reasonable asymptotic elasticity as in the classical single model framework of \citet{kramkov1999}. This assumption allows us to obtain the optimal trading strategy by directly optimising in the primal problem of maximising expected utility from terminal wealth, and we do not need specific properties of the dual problem. Therefore, we only need the existence of (locally) consistent price systems for \emph{one} level of transaction costs $\lambda'\in(0,\lambda)$ rather than for all $\lambda'\in(0,\lambda)$ as in \citet{chau2019}; see Remark 4.5 of \citet{chau2019} and Lemma~\ref{lem:bipolar_relation_model_based} below.

In the framework of model uncertainty of \citet{chau2019}, a super-replication theorem has been recently established in \citet{chau2022}. For model uncertainty with uncountably many probability measures on the same probability space, \citet{bartl2021} derived a duality result, in the spirit to the one of \citet{kramkov1999}, for the case of a single model, for utility maximisation from terminal wealth without transaction costs.

The paper is organised as follows. We introduce the setting and formulate the problem in Section~\ref{sec:setup}. Our main result is stated and explained in Section~\ref{sec:main_result}. The proof of the main result is covered in Section~\ref{sec:proof_main_result}. For better readability, some proofs and explanations are deferred to Appendix~\ref{app:proof_con_compct}.

\section{Formulation of the problem}\label{sec:setup}
We consider a financial market consisting of one riskless asset and one risky asset. The riskless asset has constant price $1$. The dynamics of the risky asset is uncertain. For this purpose, let $\Theta$ be a non-empty set and consider a family of strictly positive adapted c{\`a}dl{\`a}g processes $(S_{t}^{\theta})_{0\leq t\leq T}$, for each $\theta\in\Theta$, on some underlying filtered probability space $(\Omega,\mathcal{F},(\mathcal{F}_{t})_{0\leq t\leq T},P)$. No further conditions are imposed on $\Theta$. By passing to the right-continuous version, we can assume without loss of generality that the filtration $\mathbb{F}\coloneqq(\mathcal{F}_{t})_{0\leq t\leq T}$ is right continuous. However, the filtration does not necessarily need to be complete. We denote by $\mathcal{F}^{P}$ the $P$-completion of $\mathcal{F}$, and accordingly $\mathbb{F}^{P}$ the usual augmentation of $\mathbb{F}$ by adjoining all the $P$-negligible sets to $\mathcal{F}_{0}$. In this framework, model uncertainty is then incorporated by considering simultaneously all models $(S_{t}^{\theta})_{0\leq t\leq T}$, for $\theta\in\Theta$, as possible evolutions for the price process of the risky asset, so that the set $\Theta$ provides the parametrisation of model uncertainty.

To illustrate why this particular setting of model uncertainty is useful, \citet{chau2019} provide some interesting examples of well-known models in the finance literature (see Examples~2.2, 2.4 and 2.5 therein). For instance, in Example~2.2, they describe a robust version of the Black-Scholes model, where the parametrisation of model uncertainty is a subset of $\mathbb{R}^{2}$, which describes potential values for the drift and volatility term of the geometric Brownian motion that models the price process. To see why such a parametrisation of robustness is useful and interesting in our extended setup that includes c{\`a}dl{\`a}g price processes, we provide some more examples.
\begin{example}\label{ex:robust_Merton}
In the robust Merton jump diffusion model (cf., \citet{merton1976}), the price of the risky asset satisfies the stochastic differential equation
\begin{align*}
\diff S_{t}^{(\mu,\sigma,\nu)}&=S_{t}^{(\mu,\sigma,\nu)}((\mu-\nu k)\diff t+\sigma\diff W_{t}+\diff X_{t}^{\nu}),\qquad S_{0}^{(\mu,\sigma,\nu)}=s_{0}>0,
\end{align*}
where $\mu$ and $\sigma$ are constants, $W$ is a standard Brownian motion, and $X^{\nu}$ is an independent compound Poisson process with intensity $\nu>0$, whose jump size distribution has expected value $k$. In particular, the process $X^{\nu}=(X_{t}^{\nu})_{t\geq 0}$ satisfies
\begin{equation}\label{eq:compound_Poisson_process}
X_{t}^{\nu}\coloneqq\sum_{i=1}^{N_{t}^{\nu}}Y_{i}^{\nu},\qquad t\geq 0,
\end{equation}
where $N^{\nu}=(N_{t}^{\nu})_{t\geq 0}$ is a Poisson process with intensity $\nu$, the random variables $Y_{i}^{\nu}$, $i\in\mathbb{N}$, are independent and identically distributed with mean $\E[Y_{i}^{\nu}]=k$, and which are independent of the process $N^{\nu}$. The uncertainty is then modelled by
\begin{align*}
\Theta&=\big\{\theta=(\mu,\sigma,\nu)\in\mathbb{R}^{3}\colon\underline{\mu}\leq\mu\leq\overline{\mu},\,\underline{\sigma}\leq\sigma\leq\overline{\sigma},\,\underline{\nu}\leq\nu\leq\overline{\nu}\big\},
\end{align*}
where $\underline{\mu}\leq\overline{\mu}$, $0<\underline{\sigma}\leq\overline{\sigma}$ and $0<\underline{\nu}\leq\overline{\nu}$ are given constants. The classical, non-robust Merton jump diffusion model corresponds to the case where $\underline{\mu}=\overline{\mu}$, $\underline{\sigma}=\overline{\sigma}$ and $\underline{\nu}=\overline{\nu}$. In fact, if we set $\nu=0$, we obtain the non-robust Black-Scholes model as described in Example~2.2 in \cite{chau2019}.
\end{example}
On one hand, Example~\ref{ex:robust_Merton} shows the importance of considering c{\`a}dl{\`a}g price processes, as very general models may include jump terms to describe the ``abnormal'' variations in the price, as \citet{merton1976} describes it. These variations are due to the arrival of important new information that has more than a marginal effect on the price. On the other hand, we may see from Example~\ref{ex:robust_Merton} how this particular way of parameterising model uncertainty can be useful in statistical estimation and model calibration, and therefore relevant for practical applications.
\begin{example}\label{ex:compound_Poisson_log_price_moves}
The compound Poisson process can be used to describe more abstract models with jumps. In particular, we may include the parameters of the jump size distribution in the parametrisation of model uncertainty. For instance, we can define the price process as
\begin{align*}
\log(S_{t}^{(\theta_{1},\dots,\theta_{k},\nu)}/S_{0}^{(\theta_{1},\dots,\theta_{k},\nu)})&=X_{t}^{1,(\theta_{1},\dots,\theta_{k},\nu)}-X_{t}^{2,(\theta_{1},\dots,\theta_{k},\nu)},\qquad k\in\mathbb{N},
\end{align*}
where $X^{1,(\theta_{1},\dots,\theta_{k},\nu)}$ and $X^{2,(\theta_{1},\dots,\theta_{k},\nu)}$ are two independent copies of the compound Poisson process satisfying equation \eqref{eq:compound_Poisson_process} (see Section~3 in \cite{geman2001} for details), the random variables $Y_{i}^{(\theta_{1},\dots,\theta_{k},\nu)}$, $i\in\mathbb{N}$, are distributed with specific parameters $\theta_{1},\dots,\theta_{k}$, and the Poisson process $N^{\nu}$ has intensity $\nu>0$. For instance, we could take $Y_{i}^{(\sigma^2,\nu)}=\vert Z_{i}^{(\sigma^2,\nu)}\vert$, where each $Z_{i}^{(\sigma^2,\nu)}$ satisfies $Z_{i}^{(\sigma^2,\nu)}\sim\mathcal{N}(0,\sigma^{2})$, so that $\theta_{1}=\sigma^{2}$. Another example is to take $Y_{i}^{(\alpha,\nu)}\sim\mathrm{Exp}(\alpha)$, so that $\theta_{1}=\alpha$.
\end{example}
\begin{example}
    The previous example is a specific case of a larger class of models (see \citet{geman2001}), where the price process is a time-changed Brownian motion. The underlying idea uses the result that every semimartingale can be written as a time-changed Brownian motion on some adequately defined probability space (see, e.g., \citet{monroe1978}). If the time change was continuous, then under fairly general conditions it could be represented as an It{\^o} process. Since time is increasing, this continuity assumption would force us to set the volatility term of the time change process to zero, which means that the time change is locally deterministic. If we relate the time-change to the information flow embedded in the price, where there is local uncertainty about these flows, we must assume that the time change is a purely discontinuous process. Hence, the price process itself is also purely discontinuous.

    For instance, in Example~\ref{ex:compound_Poisson_log_price_moves}, if we choose 
    \begin{align*}
        \log(S_{t}^{(\sigma,\nu)}/S_{0}^{(\sigma,\nu)})&=X_{t}^{1,(\sigma,\nu)}-X_{t}^{2,(\sigma,\nu)},\qquad X_{t}^{j,(\sigma,\nu)}=\sum_{i=1}^{N_{t}^{j,\nu}}Y_{i}^{j,(\sigma,\nu)},\quad j\in\{1,2\},
    \end{align*}
    where $N^{1,\nu}$ and $N^{2,\nu}$ are independent Poisson processes with intensity $\nu>0$, and where $Y_{i}^{j,(\sigma,\nu)}$, $i\in\mathbb{N}$ and $j\in\{1,2\}$, are independent random variables with a half-normal distribution (i.e., $Y_{i}^{j,(\sigma,\nu)}=\vert Z_{i}^{j,(\sigma,\nu)}\vert$ with $Z_{i}^{j,(\sigma,\nu)}\sim\mathcal{N}(0,\sigma^{2})$), the time change is $T^{\nu}(t)=N_{t}^{1,\nu}+N_{t}^{2,\nu}$, so that
    \begin{align*}
        \log(S_{t}^{(\sigma,\nu)}/S_{0}^{(\sigma,\nu)})&=\sigma W_{T^{\nu}(t)}=\sigma W_{N_{t}^{1,\nu}+N_{t}^{2,\nu}},\qquad t\geq 0,
    \end{align*}
    where $W=(W_{t})_{t\geq 0}$ is a standard Brownian motion. To avoid measurability issues, one takes different Brownian motions for each model. For the details of this derivation, we refer to Section~3.2 in \cite{geman2001}. In such a model, the parametrisation of model uncertainty is of the form $\Theta=\{\theta=(\sigma,\nu)\in\mathbb{R}^{2}\colon 0<\underline{\sigma}\leq\sigma\leq\overline{\sigma},\,0<\underline{\nu}\leq\nu\leq\overline{\nu}\}$.

    Another model class is to use the same price structure as above, where the log-price is the difference of two independent pure jump processes, but instead of using the compound Poisson process, we may use a Gamma process. In particular, let
    \begin{align*}
        \log(S_{t}^{(\mu_{1},\nu_{1},\mu_{2},\nu_{2})}/S_{0}^{(\mu_{1},\nu_{1},\mu_{2},\nu_{2})})&=U_{t}^{1,(\mu_{1},\nu_{1})}-U_{t}^{2,(\mu_{2},\nu_{2})},\quad U_{t}^{j,(\mu_{j},\nu_{j})}=\frac{\nu_{j}}{\mu_{j}}\gamma_{j}\bigg(\frac{\mu_{j}^{2}}{\nu_{j}}t\bigg),\>j\in\{1,2\},
    \end{align*}
    where $\gamma_{1}$ and $\gamma_{2}$ are two independent standard Gamma processes, so that $\gamma_{j}(t)\sim\mathrm{Gamma}(t,1)$ for $j\in\{1,2\}$, and therefore $U_{t}^{j,(\mu_{j},\nu_{j})}\sim\mathrm{Gamma}(\frac{\mu_{j}^{2}}{\nu_{j}}t,\frac{\mu_{j}}{\nu_{j}})$. More precisely, the increments $U_{t+h}^{j,(\mu_{j},\nu_{j})}-U_{t}^{j,(\mu_{j},\nu_{j})}$, for any $h>0$, have density
    \begin{align*}
        f_{U_{t+h}^{j,(\mu_{j},\nu_{j})}-U_{t}^{j,(\mu_{j},\nu_{j})}}(x)&=\bigg(\frac{\mu_{j}}{\nu_{j}}\bigg)^{\frac{\mu_{j}^{2}h}{\nu_{j}}}\frac{x^{\frac{\mu_{j}^{2}h}{\nu_{j}}-1}\exp\big(-\frac{\mu_{j}}{\nu_{j}}x\big)}{\Gamma\big(\frac{\mu_{j}^{2}h}{\nu_{j}}\big)},\qquad x>0,
    \end{align*}
    so that $U_{t+h}^{j,(\mu_{j},\nu_{j})}-U_{t}^{j,(\mu_{j},\nu_{j})}$ has mean $\mu_{j}h$ and variance $\nu_{j}h$. The parameters $\mu_{j}$ and $\nu_{j}$ are thus called mean and variance rates, respectively. Under suitable assumptions on the parameters, it turns out that the log-price process is of the form $\alpha\gamma_{3}(t)+\beta W_{\gamma_{3}(t)}$, where $\alpha$ and $\beta$ are constants, $\gamma_{3}(t)$ is a linear combination of the gamma processes $\gamma_{1}(t)$ and $\gamma_{2}(t)$, and where $W=(W_{t})_{t\geq 0}$ is a standard Brownian motion. Again, each model has its own Brownian motion. For details, we refer to Section~3.4 in \cite{geman2001} (see also \cite{madan1998} for more information on the variance-Gamma process). With respect to the parametrisation of model uncertainty, we use here
    \begin{align*}
        \Theta&=\big\{\theta=(\mu_{1},\nu_{1},\mu_{2},\nu_{2})\in\mathbb{R}^{4}\colon0<\underline{\mu_{j}}\leq\mu_{j}\leq\overline{\mu_{j}},\,0<\underline{\nu_{j}}\leq\nu_{j}\leq\overline{\nu_{j}},\,j\in\{1,2\}\big\}.
    \end{align*}

    Similar models using this particular approach of time change also include general subordinators (see \cite{geman2001} for details). These models are relevant because they are capable of modelling the local uncertainty of the information flow that influences the price of the risky asset, and therefore require jump processes. On the other hand, the described models are tailor-made for statistical inference and model calibration, and therefore contain an important component for practical applications.
\end{example}
In each model, we assume that the risky asset $S^{\theta}$ can be traded under proportional transaction costs $\lambda\in(0,1)$. That is, an agent can buy the risky asset at  the higher \textit{ask price} price $S^{\theta}$ but can only sell it at the lower \textit{bid price} $(1-\lambda)S^{\theta}$. The riskless asset can be traded without transaction costs.

Trading strategies are given by $\mathbb{R}^{2}$-valued, $\mathbb{F}$-predictable processes $H=(H_{t}^{0},H_{t}^{1})_{0\leq t\leq T}$, whose total variation $\vert H\vert=(\vert H^{0}\vert_{t},\vert H^{1}\vert_{t})_{0\leq t\leq T}$ is a $[0,\infty]\times[0,\infty]$-valued, $\mathbb{F}$-predictable process satisfying $P[\vert H^{0}\vert_{T}<\infty]=P[\vert H^{1}\vert_{T}<\infty]=1$. For each real-valued, predictable process $H$, whose total variation is $P$-a.s.~finite, we define $\Delta H_{t}\coloneqq H_{t}-H_{t-}$, and $\Delta_{+}H_{t}\coloneqq H_{t+}-H_{t}$, where $H_{t-}:=\lim_{s\uparrow t}H_s$ and $H_{t+}:=\lim_{s\downarrow t}H_s$ denote the left and right limits, respectively, and the processes
\begin{align*}
    H_{t}^{d}\coloneqq\sum_{s\leq t}\Delta H_{s},\quad\text{and}\quad H_{t}^{d,+}\coloneqq\sum_{s<t}\Delta_{+}H_{s}.
\end{align*}
Finally, we define the continuous part $H^{c}$ of $H$ by $H_{t}^{c}\coloneqq H_{t}-H_{t}^{d}-H_{t}^{d,+}$.
Moreover, for each real-valued, $\mathbb{F}$-predictable process $H$, whose total variation is $[0,\infty]$-valued, $\mathbb{F}$-predictable and $P$-a.s.~finite, we define the processes $H^{\uparrow}$ and $H^{\downarrow}$ via
\begin{align*}
    H_{t}^{\uparrow}(\omega)&\coloneqq\frac{\vert H(\omega)\vert_{t}+H_{t}(\omega)}{2}\mathbbm{1}_{\llbracket 0,\sigma_{1}\wedge\sigma_{2}\llbracket}(\omega,t)\mathbbm{1}_{\llbracket 0,\sigma_{3}\rrbracket}(\omega,t),\\
    H_{t}^{\downarrow}(\omega)&\coloneqq\frac{\vert H(\omega)\vert_{t}-H_{t}(\omega)}{2}\mathbbm{1}_{\llbracket 0,\sigma_{1}\wedge\sigma_{2}\llbracket}(\omega,t)\mathbbm{1}_{\llbracket 0,\sigma_{3}\rrbracket}(\omega,t),
\end{align*}
where $\sigma_{1}$, $\sigma_{2}$ and $\sigma_{3}$ are $\mathbb{F}$-stopping times defined as
\begin{align*}
    \sigma_{1}&\coloneqq\inf\{t>0\colon\vert H\vert_{t-}=\infty\},\>\sigma_{2}\coloneqq\inf\{t>0\colon\vert \Delta H_{t}\vert=\infty\},\>\sigma_{3}\coloneqq\inf\{t>0\colon\vert\Delta_{+}H_{t}\vert=\infty\}.
\end{align*}
Since $\sigma_{1}$ and $\sigma_{2}$ are predictable stopping times (see Remark~IV.87(d) in \cite{dellacherie1978} together with Remark~E of the preliminary section ``Complements to Chapter~IV'' in \cite{dellacherie1982}), the processes $H^{\uparrow}$ and $H^{\downarrow}$ are $\mathbb{F}$-predictable. Furthermore, $H^{\uparrow}$ and $H^{\downarrow}$ are $P$-a.s.~increasing and satisfy $P[H_{T}^{\uparrow}<\infty]=P[H_{T}^{\downarrow}<\infty]=1$ because $P[\sigma_{1}=\infty]=P[\sigma_{2}=\infty]=P[\sigma_{3}=\infty]=1$. We then have that $H_{t}(\omega)=H_{t}^{\uparrow}(\omega)-H_{t}^{\downarrow}(\omega)$ for almost every $\omega$ for every $t\in[0,T]$. In the sequel, we will refer to this representation as the \emph{canonical decomposition}, and very often use the simplified notation $H=H^{\uparrow}-H^{\downarrow}$. In particular, in the spirit of \citet{campi2006}, we may say that the process $H$ is of finite variation as almost all of its paths have finite variation.

For a fixed model $\theta\in\Theta$, a strategy is called \emph{self-financing under transaction costs $\lambda$}, if
\begin{equation}\label{eq:selffin}
    \int_{s}^{t}\diff H_{u}^{0}\leq-\int_{s}^{t}S_{u}^{\theta}\diff H_{u}^{1,\uparrow}+\int_{s}^{t}(1-\lambda)S_{u}^{\theta}\diff H_{u}^{1,\downarrow}
\end{equation}
for all $0\leq s<t\leq T$, where the integrals
\begin{align}
    \int^t_s S_u dH^{1,\uparrow}_u&:= \int^t_s S_u dH_u^{1,\uparrow,c} + \sum_{s < u \leq t}S_{u-}\Delta H_u^{1,\uparrow} + \sum_{s \leq u < t} S_u \Delta_+ H_u^{1,\uparrow},\label{eq:stoch_int_up}\\
    \int^t_s(1-\lambda) S_u dH_u^{1,\downarrow}&:= \int^t_s (1-\lambda) S_u dH_u^{1, \downarrow,c} + \sum_{s < u \leq t} (1-\lambda)S_{u-}\Delta H^{1, \downarrow}_u + \sum_{s \leq u < t} (1-\lambda)S_u \Delta_+ H^{1,\downarrow}_u\label{eq:stoch_int_down}
\end{align}
can be defined pathwise by using Riemann-Stieltjes integrals on $\llbracket 0,\sigma_1\wedge\sigma_2\llbracket\cap\llbracket 0,\sigma_{3}\rrbracket\cap[0,T]$. For details on the above integrals \eqref{eq:stoch_int_up} and \eqref{eq:stoch_int_down}, we refer to Section 7 in \cite{czichowsky2016b}.
The self-financing condition \eqref{eq:selffin} then states that purchases and sales of the risky asset are accounted for in the riskless position:
\begin{align}
    \diff H_{t}^{0,c}&\leq-S_{t}^{\theta}\diff H_{t}^{1,\uparrow,c}+(1-\lambda)S_{t}^{\theta}\diff H_{t}^{1,\downarrow,c},\qquad 0\leq t\leq T,\label{eq:selffin_cont}\\
    \Delta H_{t}^{0}&\leq-S_{t-}^{\theta}\Delta H_{t}^{1,\uparrow}+(1-\lambda)S_{t-}^{\theta}\Delta H_{t}^{1,\downarrow},\qquad 0\leq t\leq T,\label{eq:selffin_disc-}\\
    \Delta_{+} H_{t}^{0}&\leq-S_{t}^{\theta}\Delta_{+} H_{t}^{1,\uparrow}+(1-\lambda)S_{t}^{\theta}\Delta_{+} H_{t}^{1,\downarrow},\qquad 0\leq t\leq T.\label{eq:selffin_disc+}
\end{align}
More precisely, we require that the processes $H^{0}=H^{0,\uparrow}-H^{0,\downarrow}$ and $H^{1}=H^{1,\uparrow}-H^{1,\downarrow}$ have increments that satisfy $\diff H_{t}^{0,\uparrow}\leq(1-\lambda)S_{t}^{\theta}\diff H_{t}^{1,\downarrow}$ and $\diff H_{t}^{0,\downarrow}\geq S_{t}^{\theta}\diff H_{t}^{1,\uparrow}$, which eventually leads to the self-financing condition \eqref{eq:selffin}, or in differential form to equations \eqref{eq:selffin_cont}, \eqref{eq:selffin_disc-} and \eqref{eq:selffin_disc+}. It is worth noting that the self-financing condition \eqref{eq:selffin}, and therefore \eqref{eq:selffin_cont}, \eqref{eq:selffin_disc-} and \eqref{eq:selffin_disc+}, is only well-defined on the set $\{\vert H^{0}\vert_{T}<\infty\}\cap\{\vert H^{1}\vert_{T}<\infty\}$. However, by assumption it holds that $P[\vert H^{0}\vert_{T}<\infty]=P[\vert H^{1}\vert_{T}<\infty]=1$, which implies that the self-financing condition is well-defined $P$-a.s.

For a fixed model $\theta\in\Theta$, a self-financing strategy $H$ is \emph{admissible} under transaction costs $\lambda$, if its \emph{liquidation value} $V^{\mathrm{liq}}(\theta,H)$ satisfies
\begin{equation}\label{eq:admissible}
    V_{t}^{\mathrm{liq}}(\theta,H)\coloneqq H_{t}^{0}+(H_{t}^{1})^{+}(1-\lambda)S_{t}^{\theta}-(H_{t}^{1})^{-}S_{t}^{\theta}\geq 0,\quad\text{a.s.},
\end{equation}
for all $t\in[0,T]$. For $x>0$ and a fixed model $\theta\in\Theta$, we denote by $\mathcal{H}^{\theta}(x)$ the set of all self-financing, admissible trading strategies under transaction costs $\lambda$, starting with initial endowment $(H_{0}^{0},H_{0}^{1})=(x,0)$. As for the self-financing condition, the admissibility condition \eqref{eq:admissible} is well-defined on $\llbracket 0,\sigma_1\wedge\sigma_2\llbracket\cap\llbracket 0,\sigma_{3}\rrbracket\cap[0,T]$.

In order to get towards a model-independent setup (that is, we want to consider self-financing trading strategies that are admissible for all models $\theta\in\Theta$), we pass to a dominating pair $(H^{0},H^{1})$ for each trading strategy $H\in\mathcal{H}^{\theta}(x)$ where equality holds true in \eqref{eq:selffin}. This way, we only have to specify one of the holdings, e.g., the number of stocks $H^{1}$. For a fixed model $\theta\in\Theta$ and $x>0$, we thus define the set
\begin{align*}
    \mathcal{A}^{\theta}(x)\coloneqq\big\{H^{1}\colon (H^{0},H^{1})\in\mathcal{H}^{\theta}(x),\>\diff H_{t}^{0}=-S_{t}^{\theta}\diff H_{t}^{1,\uparrow}+(1-\lambda)S_{t}^{\theta}\diff H_{t}^{1,\downarrow}\big\}.
\end{align*}
This definition is in line with the set of admissible trading strategies considered in the case of one single model, that is, $\Theta=\{\theta\}$, as discussed in \cite{czichowsky2016}. We will also refer to this situation as the \emph{non-robust} case. Moreover, by letting $\mathcal{A}(x)\coloneqq\bigcap_{\theta\in\Theta}\mathcal{A}^{\theta}(x)$, we obtain the analogue of the set of model-independent admissible trading strategies given in \cite{chau2019}.

Note that $H^{1}\in\mathcal{A}(x)$ does no longer depend on $\theta$. However, the holdings in the bond $H^{0}$ still depend on $\theta$. We will use the notation $H^{0,\theta}$ to indicate this dependence. In particular, we define $H_{t}^{0,\theta}\coloneqq x+H_{t}^{0,\theta,\uparrow}-H_{t}^{0,\theta,\downarrow}$ with
\begin{align}\label{eq:H0_theta}
    H_{t}^{0,\theta,\downarrow}\coloneqq\int_{0}^{t}S_{u}^{\theta}\diff H_{u}^{1,\uparrow},\quad\text{and}\quad H_{t}^{0,\theta,\uparrow}\coloneqq\int_{0}^{t}(1-\lambda)S_{u}^{\theta}\diff H_{u}^{1,\downarrow}.
\end{align}
Moreover, we write $V_{t}^{\mathrm{liq}}(\theta,H^{1})$ to indicate that $H^{0}$ in \eqref{eq:admissible} is defined via \eqref{eq:H0_theta}. We also notice that the mapping $H^{1}\mapsto V_{t}^{\mathrm{liq}}(\theta,H^{1})$ is concave. To see this statement, we first note that $x^{+}=\max(x,0)$ and $x^{-}=\max(-x,0)$ are convex functions. By definition (see, e.g., equation (5.1) in Section~X.5 in \cite{doob1994}) we have
\begin{align*}
    H_{t}^{1,\downarrow}(\omega)&=\sup\bigg\{\sum_{t_{k}\in\Pi}(H_{t_{k}}^{1}(\omega)-H_{t_{k-1}}^{1}(\omega))^{-}\mathbbm{1}_{\llbracket 0,\sigma_{1}\wedge\sigma_{2}\llbracket\cap\llbracket 0,\sigma_{3}\rrbracket}(\omega,t)\bigg|\Pi\text{ is a partition of }[0,t]\bigg\},
\end{align*}
which implies that $(\alpha H_{t}^{1}+(1-\alpha)\widetilde{H}_{t}^{1})^{\downarrow}\leq\alpha H_{t}^{1,\downarrow}+(1-\alpha)\widetilde{H}_{t}^{1,\downarrow}$ for all $\alpha\in[0,1]$. Now, observe that
\begin{align*}
    V_{t}^{\mathrm{liq}}(\theta,H^{1})=-\lambda\int_{0}^{t}S_{u}^{\theta}\diff H_{u}^{1,\downarrow}-\int_{0}^{t}S_{u}^{\theta}\diff H_{u}^{1}+H_{t}^{1}S_{t}^{\theta}-(H_{t}^{1})^{+}\lambda S_{t}^{\theta},
\end{align*}
which, together with the fact that $S^{\theta}$ is strictly positive, implies that  $V_{t}^{\mathrm{liq}}(\theta,H^{1})$ is concave in $H^{1}$.
\ 

Now, we have everything in place to formulate the optimisation problem. For this reason, we consider an investor whose preferences are modelled by a standard utility function\footnote{That is a strictly concave, non-decreasing and continuously differentiable function satisfying the Inada conditions $U'(0)=\lim_{x\to 0}U'(x)=\infty$ and $U'(\infty)=\lim_{x\to\infty}U'(x)=0$.} $U\colon(0,\infty)\to\mathbb{R}$. For a given initial capital $x>0$, the investor wants to maximise the expected utility of terminal wealth with respect to the worst-case scenario of all possible models. This means that the investor wants to find the optimal strategy $\widehat{H}^{1}\in\mathcal{A}(x)$ that maximises $\inf_{\theta\in\Theta}\E\big[U\big(V_{T}^{\mathrm{liq}}\big(\theta,H^{1}\big)\big)\big]$. The value function of this \emph{primal optimisation problem} is denoted by
\begin{equation}\label{eq:primalproblem}
    u(x)\coloneqq\sup_{H^{1}\in\mathcal{A}(x)}\inf_{\theta\in\Theta}\E\big[U\big(V_{T}^{\mathrm{liq}}\big(\theta,H^{1}\big)\big)\big].
\end{equation}
In the sequel, we answer the question of whether, and under which assumptions, the robust primal problem \eqref{eq:primalproblem} admits a solution.

\section{Main result}\label{sec:main_result}
In the frictionless case, the Fundamental Theorem of Asset Pricing states that the no-arbitrage condition is equivalent to the property that the price process admits an equivalent local martingale measure (see \citet{delbaen1994}). In the setting of transaction costs, the notion of \emph{consistent price systems} plays a role similar to the notion of equivalent martingale measures in the frictionless case.
\begin{definition}\label{def:Consistent_price_systems}
    Fix $0<\lambda<1$ and $\theta\in\Theta$. The strictly positive adapted c{\`a}dl{\`a}g process $S^{\theta}$ satisfies the condition $(\mathrm{CPS}^{\lambda})$ of having a \emph{$\lambda$-consistent price system}, if there exists a pair of processes $Z^{\theta}=(Z_{t}^{0,\theta},Z_{t}^{1,\theta})_{0\leq t\leq T}$, consisting of a density process $Z^{0,\theta}=(Z_{t}^{0,\theta})_{0\leq t\leq T}$ of an equivalent local martingale measure $Q^{\theta}\approx P$ for a price process $\widetilde{S}^{\theta}=(\widetilde{S}^{\theta}_{t})_{0\leq t\leq T}$ evolving in the bid-ask spread $[(1-\lambda)S^{\theta},S^{\theta}]$, and $Z^{1,\theta}=Z^{0,\theta}\widetilde{S}^{\theta}$. In particular, $\widetilde{S}^{\theta}$ satisfies
    \begin{align}\label{eq:CPS}
        (1-\lambda)S_{t}^{\theta}\leq\widetilde{S}_{t}^{\theta}\leq S_{t}^{\theta},\qquad 0\leq t\leq T.
    \end{align}
    We further say that $S^{\theta}$ satisfies the condition $(\mathrm{CPS}^{\lambda})$ \emph{locally} of having a local $\lambda$-consistent price system, if there exists a strictly positive stochastic process $Z^{\theta}=(Z^{0,\theta},Z^{1,\theta})$ and a localising sequence $(\tau_{n})_{n\geq 0}$ of stopping times, such that $(Z^{\theta})^{\tau_{n}}$ is a $\lambda$-consistent prices system for the stopped process $(\widetilde{S}^{\theta})^{\tau_{n}}$ for each $n\geq 0$. We denote the space of all such processes by $\mathcal{Z}^{\theta}$ and $\mathcal{Z}_{\mathrm{loc}}^{\theta}$, respectively.
\end{definition}
We impose the existence of local consistent price systems for every model $\theta\in\Theta$.
\begin{assumption}\label{assump:CPS_all}
    For each $\theta\in\Theta$ and for some $0<\lambda'<\lambda$, the price process $S^{\theta}$ satisfies $(\mathrm{CPS}^{\lambda'})$ locally.
\end{assumption}
In the non-robust setting, i.e., for a fixed model $\theta\in\Theta$ and $x>0$, we define
\begin{align}\label{def:C}
    \mathcal{C}^{\theta}(x)\coloneqq\big\{g\in L_{+}^{0}(\Omega,\mathcal{F},P)\colon g\leq V_{T}^{\mathrm{liq}}(\theta,H),\text{ for some }H\in\mathcal{H}^{\theta}(x)\big\}.
\end{align}
This is the set of terminal positions $g$ that one can superreplicate with an admissible trading strategy $H$ and initial endowment $x$. Note that $\mathcal{C}^{\theta}(x)=x\mathcal{C}^{\theta}(1)$ for all $x>0$. Since we are not interested in an analysis of the dual problem on the level of stochastic processes, we can define the dual variables simply on the level of random variables by setting
\begin{align}\label{def:D}
    \mathcal{D}^{\theta}(y)\coloneqq\big\{h\in L_{+}^{0}(\Omega,\mathcal{F},P)\colon\E[gh]\leq y\text{ for all $g\in\mathcal{C}^{\theta}(1)$}\big\}.
\end{align}
Note that $\mathcal{D}^{\theta}(y)=y\mathcal{D}^{\theta}(1)$ $\forall y>0$, and by Proposition~\ref{prop:strong_supermartingale} $\{yZ^0_T\colon(Z^0,Z^1)\in\mathcal{Z}_{\mathrm{loc}}^{\theta}\}\subseteq\mathcal{D}^{\theta}(y)$ so that $\mathcal{D}^{\theta}(y)\ne\emptyset$.
Moreover, by definition the set $\mathcal{D}^{\theta}(1)$ corresponds to the polar $(\mathcal{C}^{\theta}(1))^{\circ}$ of $\mathcal{C}^{\theta}(1)$ in $L_{+}^{0}(P)$ as in Definition~1.2 of \citet{brannath1999}. We verify in Lemma~\ref{lem:C(x)_convex_solid_closed} that the sets $\mathcal{C}^{\theta}(1)$ and $\mathcal{D}^{\theta}(1)$ satisfy the properties of the sets $\mathcal{C}$ and $\mathcal{D}$ in Proposition~3.1 of \citet{kramkov1999}. We can therefore use the abstract version of the duality result for utility maximisation of random variables in Theorem~3.1 in \cite{kramkov1999} in each single model $S^{\theta}$.

Using the sets $\mathcal{C}^{\theta}(x)$ and $\mathcal{D}^{\theta}(y)$ allows us to define the single model value functions. In particular, the primal and dual value functions for the $\theta$-model are given by
\begin{align*}
    u^{\theta}(x)\coloneqq\sup_{f\in\mathcal{C}^{\theta}(x)}\E[U(f)],\quad\text{and}\quad j^{\theta}(y)\coloneqq\inf_{h\in\mathcal{D}^{\theta}(y)}\E[J(h)],
\end{align*}
where $J(y)\coloneqq\sup_{x>0}(U(x)-xy)$, $y>0$, is the Legendre transform of $-U(-x)$. For our purpose, we need the following assumption.
\begin{assumption}\label{assump:AEU_primalrpoblem}
    The asymptotic elasticity of $U$ is strictly less than one, that is,
    \begin{align*}
        \mathrm{AE}(U)\coloneqq\limsup_{x\to\infty}\frac{xU'(x)}{U(x)}<1,
    \end{align*}
    and for each model $\theta\in\Theta$, the primal value function $u^{\theta}(x)$ is finite for \emph{some} $x>0$ and hence \emph{all} $x>0$ by concavity of $u^{\theta}(x)$.
\end{assumption}
\begin{remark}
In their paper \citet{chau2019}, instead of Assumption \ref{assump:AEU_primalrpoblem}, use the assumption that $j^{\theta}(y)$, $y>0$, is finite for all models $\theta\in\Theta$ (see Assumption~3.5 in \cite{chau2019}). However, note that Assumption~\ref{assump:CPS_all} and Assumption~\ref{assump:AEU_primalrpoblem} imply Assumption~3.5 in \cite{chau2019} (see Theorem~2 and the subsequent Note~2 in \cite{kramkov2003} together with Theorem~2.2 of \cite{kramkov1999}). In particular, to make use of Note~2 in \cite{kramkov2003} we need the existence of some $y_{0}>0$ such that $j^{\theta}(y)<\infty$ for all $y>y_{0}$. This existence is guaranteed by Theorem~3.1 of \cite{kramkov1999}, which is applicable in our setup because of Lemma~\ref{lem:C(x)_convex_solid_closed} in the Appendix. Moreover, Assumption~\ref{assump:CPS_all} is standard and Assumption~\ref{assump:AEU_primalrpoblem} is satisfied for most popular utility functions, like logarithmic and power utility.
\end{remark}
\begin{remark}\label{rem:equiv_assump_dualproblem}
    Under Assumptions~\ref{assump:CPS_all} and \ref{assump:AEU_primalrpoblem}, we obtain by Theorem~3.2 in \cite{czichowsky2016} that
    \begin{align*}
        (u^{\theta})'(\infty)=\lim_{x\to\infty}(u^{\theta})'(x)=0,\quad\forall\theta\in\Theta,
    \end{align*}
    which may be restated as
    \begin{align*}
        \lim_{x\to\infty}\frac{u^{\theta}(x)}{x}=0,\quad\forall\theta\in\Theta.
    \end{align*}
    As explained below Theorem 3.2 of \citet{czichowsky2016}, the condition that $S^{\theta}=(S_{t}^{\theta})_{0\leq t\leq T}$ satisfies $(\mathrm{CPS}^{\lambda'})$ locally for \emph{all} $0<\lambda'<\lambda$ is only necessary to argue that there is no ``duality gap'' in part (4) of Theorem 3.2 of \citet{czichowsky2016}. Since we do not require this conclusion for our results, we can use the weaker assumption that $S^{\theta}=(S_{t}^{\theta})_{0\leq t\leq T}$ satisfies $(\mathrm{CPS}^{\lambda'})$ locally for \emph{some} $0<\lambda'<\lambda$.
\end{remark}
The following theorem is the main result of this paper. It extends Theorem~3.6 in \cite{chau2019} for utility functions defined on the positive half-line from continuous price processes to general price processes with jumps.
\begin{thm}\label{thm:robust_primal_problem}
    Let $x>0$. Under Assumptions \ref{assump:CPS_all} and \ref{assump:AEU_primalrpoblem}, the robust utility maximisation problem \eqref{eq:primalproblem} admits a solution, i.e., there is $\widehat{H}^{1}\in\mathcal{A}(x)$ satisfying
    \begin{align*}
        u(x)=\inf_{\theta\in\Theta}\E\big[U\big(V_{T}^{\mathrm{liq}}\big(\theta,\widehat{H}^{1}\big)\big)\big].
    \end{align*}
    When $U$ is bounded from above, the same conclusion holds assuming only that there exists (at least) one $\theta'\in\Theta$ for which $(\mathrm{CPS}^{\lambda'})$ locally holds true for some $\lambda'\in(0,\lambda)$.
\end{thm}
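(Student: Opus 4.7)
The plan is to construct the optimiser by the classical compactness-plus-semicontinuity scheme: pick a maximising sequence in $\mathcal{A}(x)$, extract a suitable limit, and then verify admissibility and optimality. The novelty compared to \citet{chau2019} is that the compactness step has to produce a predictable finite variation process with possibly two-sided jumps, and the resulting convergence must be strong enough to pass to the limit in the self-financing and liquidation functionals \emph{for every} $\theta\in\Theta$ simultaneously.

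First I would pick $(H^{1,n})_{n\geq 1}\subset\mathcal{A}(x)$ with $\inf_{\theta\in\Theta}\E[U(V_T^{\mathrm{liq}}(\theta,H^{1,n}))]\uparrow u(x)$, and fix one $\theta'\in\Theta$ enjoying a local $\lambda'$-CPS. Admissibility of $H^{1,n}$ for $\theta'$ combined with the existence of the $\lambda'$-CPS and the slack $\lambda-\lambda'>0$ bounds the total variation $\lvert H^{1,n}\rvert_T$ in $L^0(P)$, in the standard way of \cite{czichowsky2016}. Applying the Helly-type theorem of \citet{campi2006} to $H^{1,n,\uparrow}$ and $H^{1,n,\downarrow}$ separately and passing to a subsequence yields pointwise $P$-a.s.\ convergence along $\mathbb{Q}\cap[0,T]$ to two increasing limits. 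Their càdlàg modifications, with appropriate choices at $0$ and $T$, produce a candidate $\widehat H^1\in\mathcal{P}(\mathbb{F})\cap\mathcal{V}(\mathbb{F},P)$.

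The key step that departs from \citet{chau2019} is to upgrade rational-time convergence to convergence at \emph{all} $[0,T]$-valued stopping times; this is what makes passing to the limit in \eqref{eq:H0_theta} legitimate despite possible left and right jumps. At every $t$ at which both $\widehat H^{1,\uparrow}$ and $\widehat H^{1,\downarrow}$ are continuous the upgrade is automatic by monotonicity sandwiching along rationals. The set of jump times of $\widehat H^1$ depends only on the limit and \emph{not} on $\theta$, hence can be exhausted by countably many stopping times $(\tau_k)_{k\geq 1}$, obtained either in the usual augmentation $\mathbb{F}^P$ via the Debut theorem and then pulled back to $\mathbb{F}$ up to $P$-indistinguishability, or directly through the version of the Debut theorem that does not require completeness. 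A diagonal extraction then furnishes a single subsequence along which $H^{1,n}_{\tau_k}\to\widehat H^1_{\tau_k}$ and $H^{1,n}_{\tau_k+}\to\widehat H^1_{\tau_k+}$ in probability for every $k$, yielding convergence in probability at every $[0,T]$-valued stopping time.

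Finally, I would pass to the limit in \eqref{eq:H0_theta} and \eqref{eq:admissible} for each fixed $\theta\in\Theta$. The pathwise Riemann--Stieltjes integrals against the càdlàg process $S^\theta$ are continuous in this mode of convergence, because integrator and integrand agree at every continuity point of the limit and at each of the countably many collected jump stopping times; this delivers $V_T^{\mathrm{liq}}(\theta,H^{1,n})\to V_T^{\mathrm{liq}}(\theta,\widehat H^1)\geq 0$ in probability, so $\widehat H^1\in\mathcal{A}(x)$. For optimality, Assumption~\ref{assump:AEU_primalrpoblem} together with the bound in $\mathcal{D}^{\theta'}(y)$ supplies de la Vall\'ee-Poussin-type uniform integrability of $U^+(V_T^{\mathrm{liq}}(\theta,H^{1,n}))$, so a reverse-Fatou argument gives, for each $\theta\in\Theta$,
\begin{align*}
\E[U(V_T^{\mathrm{liq}}(\theta,\widehat H^1))]\geq\limsup_n\E[U(V_T^{\mathrm{liq}}(\theta,H^{1,n}))]\geq\limsup_n\inf_{\theta''\in\Theta}\E[U(V_T^{\mathrm{liq}}(\theta'',H^{1,n}))]=u(x),
\end{align*}
and taking the infimum over $\theta$ gives attainment. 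If $U$ is bounded from above, the upper-bound step is trivial and a single $\theta'$ with a local $\lambda'$-CPS suffices. The hard part is the jump-time upgrade: arranging the Debut-based diagonal extraction in the unaugmented filtration $\mathbb{F}$ so that the chosen subsequence works \emph{uniformly} in $\theta\in\Theta$ is precisely what replaces the countable-product-space embedding of \citet{chau2019}, and is the main technical content of the proof.
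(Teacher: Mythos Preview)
Your overall architecture matches the paper's: maximising sequence, Helly/Koml\'os-type extraction with the jump-time exhaustion upgrade, Lemma~\ref{lem:conint}-style passage to the limit in the self-financing integrals, and a reverse-Fatou argument with uniform integrability of $U^+$. The identification of the key technical point---that the jump set of the limit is $\theta$-independent and can be exhausted by countably many stopping times, with a diagonal extraction on top---is exactly right and is precisely the content of Proposition~\ref{prop:cc_general}.

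There are, however, two genuine gaps. First, the Campi--Schachermayer/Koml\'os machinery produces \emph{convex combinations}, not subsequences; your write-up says ``passing to a subsequence'' and then carries the original $H^{1,n}$ through the final inequality chain. If $\widetilde H^{1,n}\in\conv(H^{1,n},H^{1,n+1},\dots)$ is what actually converges, you must insert the concavity step
\[
\inf_{\theta}\E\big[U\big(V_T^{\mathrm{liq}}(\theta,\widetilde H^{1,n})\big)\big]\geq\inf_{\theta}\E\big[U\big(V_T^{\mathrm{liq}}(\theta,H^{1,n})\big)\big]
\]
to see that $(\widetilde H^{1,n})$ is still maximising before taking the $\limsup$.

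Second, and more substantively, the uniform integrability of $U^+\big(V_T^{\mathrm{liq}}(\theta,\widetilde H^{1,n})\big)$ cannot be obtained from ``the bound in $\mathcal{D}^{\theta'}(y)$'' for the \emph{single} model $\theta'$: elements of $\mathcal{D}^{\theta'}(y)$ control $V_T^{\mathrm{liq}}(\theta',\cdot)$, not $V_T^{\mathrm{liq}}(\theta,\cdot)$ for an arbitrary $\theta$. The paper establishes UI separately for each $\theta$ via a Kramkov--Schachermayer contradiction argument: if UI fails for some $\theta$, one builds $h^n\in\mathcal{C}^{\theta}(x_0+nx)$ with $\E[U(h^n)]\geq n\alpha$, contradicting $u^{\theta}(x)/x\to 0$ from Remark~\ref{rem:equiv_assump_dualproblem}. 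This step uses $\mathcal{D}^{\theta}(y)\neq\emptyset$ and the bipolar Lemma~\ref{lem:bipolar_relation_model_based} \emph{for that $\theta$}, which is exactly why Assumption~\ref{assump:CPS_all} (CPS for \emph{every} $\theta$) is required in the first part of the theorem and why the weakening to a single $\theta'$ is only possible when $U$ is bounded above. Your ``de la Vall\'ee-Poussin'' sketch does not supply this, and as written it conflates the roles of $\theta'$ (used only for compactness via Proposition~\ref{prop:bounded_var_Ax}) and the generic $\theta$ (needed for UI).
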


\section{Proof of the main theorem}\label{sec:proof_main_result}
To prove Theorem~\ref{thm:robust_primal_problem}, we need the following four results. The first result states that for a fixed model $\theta$, the value process with respect to a consistent price system $(\widetilde{S}^{\theta},Q^{\theta})$ is an optional strong supermartingale under $Q^{\theta}$. Optional strong supermartingales have been introduced by \citet{mertens1972} as a generalisation of the notion of a c{\`a}dl{\`a}g supermartingale. We recall the definition (see Definition~1 of Appendix~I in \cite{dellacherie1982}).
\begin{definition}\label{def:optional_strong_supermartingale}
    An optional process $X=(X_{t})_{t\geq 0}$ is an \emph{optional strong supermartingale}, if:
    \begin{enumerate}
        \item For every bounded stopping time $\tau$, $X_{\tau}$ is integrable.\label{itm:OSS_integrable}
        \item For every pair of bounded stopping times $\sigma$ and $\tau$, such that $\sigma\leq\tau$, we have
        \begin{align*}
            \E[X_{\tau}|\cF_{\sigma}]\leq X_{\sigma},\quad P\text{-a.s.}
        \end{align*}\label{itm:OSS_supermartingale_property}
    \end{enumerate}
\end{definition}
For further discussion of optional strong supermartingales, we refer to Appendix~I in \citet{dellacherie1982}.
\begin{prop}\label{prop:strong_supermartingale}
    For a fixed model $\theta\in\Theta$ with price process $S^{\theta}=(S_{t}^{\theta})_{0\leq t\leq T}$, transaction costs $0<\lambda<1$, and an admissible self-financing trading strategy $H\in\mathcal{H}^{\theta}(x)$, $x>0$, as above, suppose that $(Z^{0,\theta},Z^{0,\theta}\widetilde{S}^{\theta})\in\mathcal{Z}_{\mathrm{loc}}^{\theta}$ is a locally consistent price system under transaction costs $\lambda$. Then, the process $\widetilde{V}(\theta,H)=(\widetilde{V}_{t}(\theta,H))_{0\leq t\leq T}$, defined by
    \begin{align*}
        \widetilde{V}_{t}(\theta,H)\coloneqq H_{t}^{0}+H_{t}^{1}\widetilde{S}_{t}^{\theta},\qquad 0\leq t\leq T,
    \end{align*}
    satisfies $\widetilde{V}_{t}(\theta,H)\geq V_{t}^{\mathrm{liq}}(\theta,H)$ almost surely, and $Z^{0,\theta}(H^{0}+H^{1}\widetilde{S}^{\theta})$ is an optional strong supermartingale.
\end{prop}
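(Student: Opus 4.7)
First I would prove the pointwise inequality $\widetilde{V}_t(\theta,H)\ge V^{\mathrm{liq}}_t(\theta,H)$ by a case split on the sign of $H^1_t$. When $H^1_t\ge 0$, \eqref{eq:CPS} gives $H^1_t\widetilde{S}^\theta_t\ge H^1_t(1-\lambda)S^\theta_t$; when $H^1_t<0$, the same bound yields $H^1_t\widetilde{S}^\theta_t\ge H^1_tS^\theta_t$. Adding $H^0_t$ on both sides in each case produces exactly the required inequality.

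For the supermartingale claim, set $M:=Z^{0,\theta}(H^0+H^1\widetilde{S}^\theta)$. By stopping with the localising sequence $(\tau_n)$ and passing to the limit through a Fatou argument at the end, I may assume without loss that $(Z^{0,\theta},Z^{0,\theta}\widetilde{S}^\theta)$ is a genuine consistent price system, so that both $Z^{0,\theta}$ and $Z^{0,\theta}\widetilde{S}^\theta$ are $P$-local martingales. I would then apply the integration by parts formula in the version for predictable finite-variation processes with both left and right jumps (see Section~7 in \cite{czichowsky2016b}) to the two products $Z^{0,\theta}H^0$ and $(Z^{0,\theta}\widetilde{S}^\theta)H^1$. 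The terms in which $H^0$ and $H^1$ are integrated against $dZ^{0,\theta}$ and $d(Z^{0,\theta}\widetilde{S}^\theta)$ combine into a $P$-local martingale $N$; the remaining finite-variation contribution is
\begin{align*}
A_t=\int_0^tZ^{0,\theta}_u\bigl(dH^{0,c}_u+\widetilde{S}^\theta_u\,dH^{1,c}_u\bigr)+\sum_{0<u\le t}Z^{0,\theta}_{u-}\bigl(\Delta H^0_u+\widetilde{S}^\theta_{u-}\Delta H^1_u\bigr)+\sum_{0\le u<t}Z^{0,\theta}_u\bigl(\Delta_+H^0_u+\widetilde{S}^\theta_u\Delta_+H^1_u\bigr),
\end{align*}
so that $M=M_0+N+A$.

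The decisive step is to verify that $A$ is non-increasing. Splitting $H^1=H^{1,\uparrow}-H^{1,\downarrow}$ and inserting the three self-financing inequalities \eqref{eq:selffin_cont}--\eqref{eq:selffin_disc+} into the continuous, left-jump, and right-jump summands of $A$, every increment is bounded above by a $Z^{0,\theta}$-multiple of either $(\widetilde{S}^\theta-S^\theta)\,dH^{1,\uparrow}$ or $((1-\lambda)S^\theta-\widetilde{S}^\theta)\,dH^{1,\downarrow}$ (evaluated left or right at the corresponding jump). The bid-ask bracket \eqref{eq:CPS} renders both bracketed factors non-positive, and since $Z^{0,\theta}\ge 0$, we conclude $dA\le 0$.

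Finally, part (1) together with admissibility gives $M\ge Z^{0,\theta}V^{\mathrm{liq}}(\theta,H)\ge 0$. A standard Fatou argument applied to the nonnegative process $M=M_0+N+A$ along a localising sequence for $N$ yields $\E[M_\tau\mid\mathcal{F}_\sigma]\le M_\sigma$ at any pair of bounded stopping times $\sigma\le\tau$, which is exactly the optional strong supermartingale property of Definition~\ref{def:optional_strong_supermartingale}. Optionality rather than c{\`a}dl{\`a}g regularity is all one can expect here, since the predictable finite-variation processes $H^0$ and $H^1$ may have right discontinuities preventing $M$ from being right-continuous. The principal technical obstacle will be the term-by-term bookkeeping in the integration by parts formula at simultaneous left and right jumps of $H^0$, $H^1$, and $\widetilde{S}^\theta$; this is a routine but delicate verification following the template of Section~7 in \cite{czichowsky2016b}.
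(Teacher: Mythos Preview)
Your argument reproduces the computational core of Proposition~2 in \cite{schachermayer2014b}, which is exactly what the paper invokes: integration by parts for predictable finite-variation integrators against the local martingales $Z^{0,\theta}$ and $Z^{0,\theta}\widetilde{S}^\theta$, the self-financing inequalities \eqref{eq:selffin_cont}--\eqref{eq:selffin_disc+} together with the bid--ask bracket \eqref{eq:CPS} to force the finite-variation part $A$ to be non-increasing, and a Fatou/localisation argument to conclude. So at the level of the analytic machinery there is no disagreement.

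What you have not addressed, and what the paper singles out as the actual point of the proposition in this setting, is that the filtration $\mathbb{F}$ is \emph{not} assumed complete. The proof in \cite{schachermayer2014b} relies on the usual conditions (via Theorem~IV.117 in \cite{dellacherie1978}), so your integration-by-parts argument as written only delivers an $\mathbb{F}^{P}$-optional strong supermartingale. The paper's contribution here is precisely the passage back to $\mathbb{F}$: one first runs the Schachermayer argument in $\mathbb{F}^{P}$, then uses Lemma~7 of Appendix~I in \cite{dellacherie1982} to extract an $\mathbb{F}$-optional version, and finally checks that properties \eqref{itm:OSS_integrable} and \eqref{itm:OSS_supermartingale_property} of Definition~\ref{def:optional_strong_supermartingale} descend via $\mathbb{F}\subseteq\mathbb{F}^{P}$ and the tower property. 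This step is not cosmetic in the paper's framework, since working with incomplete filtrations is one of its declared themes; your sketch should flag where the usual conditions enter and how to remove them.
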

\begin{proof}
    In the setting where the filtration $\mathbb{F}$ satisfies the usual conditions, the proof of Proposition~\ref{prop:strong_supermartingale} is given in \cite{schachermayer2014b} (see Proposition 2). In the current setup, where $\mathbb{F}$ is right-continuous but not complete, we may still follow the lines of the proof given in \cite{schachermayer2014b}, but with special care on two technicalities.
    
    First of all, the main idea of the proof is to show that $\widetilde{V}(\theta,H)$ can be decomposed into the difference of a local martingale and an increasing $\mathbb{F}$-predictable process. The existence of this so-called Mertens decomposition ensures that $Z^{0,\theta}(H^{0}+H^{1}\widetilde{S}^{\theta})$ is an optional strong supermartingale (see Theorem~20 of Appendix~I in \cite{dellacherie1982}). This result not only holds under the usual conditions, but also under the weaker assumption that $\mathbb{F}$ is incomplete (see Remark~21 of Appendix~I in \cite{dellacherie1982}).
    
    Second of all, to eventually replicate the proof of Proposition~2 in \cite{schachermayer2014b}, we need to take special care when it comes to exhausting the jumps of $H=(H^{0},H^{1})$ by a countable sequence of disjoint graphs of stopping times. Since $\mathbb{F}$ is incomplete, we can not make use of Theorem~IV.117 in \cite{dellacherie1982}, since this result requires the usual conditions. However, we may instead use Theorem~D together with Remark~E of the preliminary section ``Complements to Chapter~IV'' in \cite{dellacherie1982} to exhaust the jump times of $H$.
\end{proof}
As a second result, we also need the following bipolar relation for the non-robust models.
\begin{lemma}\label{lem:bipolar_relation_model_based}
    Fix $x,y>0$. Suppose that $S^{\theta}$ satisfies $(\mathrm{CPS}^{\lambda'})$ locally for some $\lambda'\in(0,\lambda)$. Then, a random variable $g\in L_{+}^{0}(P)$ satisfies $g\in\mathcal{C}^{\theta}(x)$ if and only if $\E[gh]\leq xy$ for all $h\in\mathcal{D}^{\theta}(y)$.
\end{lemma}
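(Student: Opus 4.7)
The statement is a bipolar relation whose easy direction is immediate from the definitions, while the converse is the content of the result. For the easy direction, note that the self-financing condition \eqref{eq:selffin} and the admissibility condition \eqref{eq:admissible} are positively homogeneous in $(H^{0},H^{1})$, so $\mathcal{C}^{\theta}(x)=x\,\mathcal{C}^{\theta}(1)$ and, likewise, $\mathcal{D}^{\theta}(y)=y\,\mathcal{D}^{\theta}(1)$. Hence if $g\in\mathcal{C}^{\theta}(x)$, then $g/x\in\mathcal{C}^{\theta}(1)$ and by the very definition of $\mathcal{D}^{\theta}(y)$ we obtain $\E[gh]=x\,\E[(g/x)h]\leq xy$ for every $h\in\mathcal{D}^{\theta}(y)$, for any (hence every) $y>0$.

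For the non-trivial direction, the plan is to apply the bipolar theorem for convex, solid subsets of $L^{0}_{+}(P)$ of \citet{brannath1999}. This requires three properties of the set $\mathcal{C}^{\theta}(1)$: convexity, solidity, and closedness in probability. Solidity is immediate from the definition, which declares $g\in\mathcal{C}^{\theta}(1)$ whenever $0\leq g\leq V^{\mathrm{liq}}_{T}(\theta,H)$ for some admissible $H$. Convexity is a standard property for sets of terminal wealths under proportional transaction costs; it can be verified directly from the concavity of $(H^{0}_{T},H^{1}_{T})\mapsto V^{\mathrm{liq}}_{T}(\theta,H)$ together with the fact that a convex combination of two self-financing, admissible pairs is again self-financing and admissible when passed to a suitable dominating pair, as in the non-robust setup of \cite{czichowsky2016}.

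The main obstacle is the closedness of $\mathcal{C}^{\theta}(1)$ in $L^{0}_{+}(P)$: this is exactly where the assumption $(\mathrm{CPS}^{\lambda'})$ for some $\lambda'\in(0,\lambda)$ enters, providing the slack that is needed to extract convergent subsequences via the compactness arguments of Campi--Schachermayer and Czichowsky--Schachermayer. The closedness of $\mathcal{C}^{\theta}(1)$ for a single strictly positive c{\`a}dl{\`a}g price process under local $(\mathrm{CPS}^{\lambda'})$ is established in \cite{czichowsky2016}; the cited proof applies here after passing to the usual augmentation $\mathbb{F}^{P}$, which is harmless since $\mathcal{C}^{\theta}(1)$ is a set of random variables on $(\Omega,\mathcal{F},P)$ and does not depend on the filtration beyond its terminal $\sigma$-field. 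With convexity, solidity, and closedness in hand, the bipolar theorem of Brannath--Schachermayer yields
\begin{align*}
    \mathcal{C}^{\theta}(1)=\big\{g\in L^{0}_{+}(P)\colon\E[gh]\leq 1\ \text{for all}\ h\in\mathcal{D}^{\theta}(1)\big\},
\end{align*}
and the scaling identities $\mathcal{C}^{\theta}(x)=x\mathcal{C}^{\theta}(1)$ and $\mathcal{D}^{\theta}(y)=y\mathcal{D}^{\theta}(1)$ turn this into the claim; the scaling also explains why the bipolar condition in the statement may equivalently be imposed for \emph{some} $y>0$ or for \emph{all} $y>0$.
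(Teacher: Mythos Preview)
Your proposal is correct and follows essentially the same route as the paper: both arguments invoke the closedness, convexity and solidity of $\mathcal{C}^{\theta}(1)$ (obtained from \cite{czichowsky2016} under the local $(\mathrm{CPS}^{\lambda'})$ assumption) and then apply the bipolar theorem of \citet{brannath1999}, with the scaling $\mathcal{C}^{\theta}(x)=x\mathcal{C}^{\theta}(1)$, $\mathcal{D}^{\theta}(y)=y\mathcal{D}^{\theta}(1)$ handling general $x,y>0$. Your write-up is simply a more expanded version of the paper's terse proof, including the helpful remark about passing to $\mathbb{F}^{P}$ for the closedness argument.
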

\begin{proof}
    By assertion~\eqref{itm:bipolarity} of Lemma~\ref{lem:C(x)_convex_solid_closed} we know that $g\in\mathcal{C}^{\theta}(1)$ if and only if $\E[gh]\leq 1$ for all $h\in\mathcal{D}^{\theta}(1)$. This result is already enough to conclude the proof, since for all $x,y>0$ it holds that $\mathcal{C}^{\theta}(x)=x\mathcal{C}^{\theta}(1)$ and $\mathcal{D}^{\theta}(y)=y\mathcal{D}^{\theta}(1)$.
\end{proof}
The third result we need, extends Proposition 3.4 in \cite{campi2006} to a more general, model-independent view. In particular, while the result from Proposition 3.4 in \cite{campi2006} holds for predictable processes, our result in Proposition~\ref{prop:cc_general} is true for processes that are $\mathbb{F}$-predictable and whose total variation is $\mathbb{F}$-predictable and $P$-a.s.~finite. The need for such a result stems from the more general assumption that the filtration $\mathbb{F}$ is not complete in our setup.
\begin{prop}\label{prop:cc_general}
    Let $(H^{1,n})_{n\in\mathbb{N}}\subseteq\mathcal{A}(x)$ for some $x>0$ with canonical decompositions $H^{1,n}=H^{1,n,\uparrow}-H^{1,n,\downarrow}$. Assume that there is $\theta\in\Theta$ so that $S^{\theta}$ satisfies $(\mathrm{CPS}^{\lambda'})$ locally, for some $0<\lambda'<\lambda$. Then there exist processes $H^{1,\uparrow}$ and $H^{1,\downarrow}$, both $[0,\infty]$-valued, increasing, $\mathbb{F}$-predictable, and $P$-a.s.~finite,
    and a sequence of convex combinations
    \begin{align*}
        (\widetilde{H}^{1,n,\uparrow},\widetilde{H}^{1,n,\downarrow})_{n\in\mathbb{N}}\subseteq\mathrm{conv}((H^{1,n,\uparrow},H^{1,n,\downarrow}),(H^{1,n+1,\uparrow},H^{1,n+1,\downarrow}),\dots),
    \end{align*}
    such that $(\widetilde{H}^{1,n,\uparrow},\widetilde{H}^{1,n,\downarrow})$ converges for almost every $\omega$ for every $t\in[0,T]$ to $(H^{1,\uparrow},H^{1,\downarrow})$, i.e.,
    \begin{align*}
        P\big[(\widetilde{H}_t^{1,n,\uparrow},\widetilde{H}_t^{1,n,\downarrow})\to (H_t^{1,\uparrow},H_t^{1,\downarrow}),\quad\forall t\in[0,T]\big]=1.
    \end{align*}
    In particular, the process $H^{1}=H^{1,\uparrow}\mathbbm{1}_{\llbracket 0,\sigma_{1}\wedge\sigma_{2}\llbracket}\mathbbm{1}_{\llbracket 0,\sigma_{3}\rrbracket}-H^{1,\downarrow}\mathbbm{1}_{\llbracket 0,\sigma_{1}\wedge\sigma_{2}\llbracket}\mathbbm{1}_{\llbracket 0,\sigma_{3}\rrbracket}$, with stopping times $\sigma_{1}\coloneqq\inf\{t>0\colon H_{t-}^{1,\uparrow}=\infty\text{ or }H_{t-}^{1,\downarrow}=\infty\}$, $\sigma_{2}\coloneqq\inf\{t>0\colon \Delta H_{t}^{1,\uparrow}=\infty\text{ or }\Delta H_{t}^{1,\downarrow}=\infty\}$, and $\sigma_{3}\coloneqq\inf\{t>0\colon \Delta_{+}H_{t}^{1,\uparrow}=\infty\text{ or }\Delta_{+}H_{t}^{1,\downarrow}=\infty\}$, satisfies $H^{1}\in\mathcal{A}(x)$, and the sequence $(\widetilde{H}^{1,n})_{n\in\mathbb{N}}\subseteq\mathrm{conv}(H^{1,n},H^{1,n+1},\dots)$ converges to $H^{1}$ for almost every $\omega$ for every $t\in[0,T]$.
\end{prop}
In the proof of  Proposition~\ref{prop:cc_general}, which is given in Appendix~\ref{app:proof_con_compct}, we will also need the following result. It shows that the pointwise convergence of integrators of finite variation is sufficient for the convergence of the integrals of a fixed c{\`a}dl{\`a}g function. While this result is not true for the standard Riemann-Stieltjes integrals, it is true for our notion of the integral that is motivated by self-financing trading in financial markets (see equations \eqref{eq:stoch_int_up} and \eqref{eq:stoch_int_down}). The reason is that, for trades immediately before a predictable stopping time, the price paid is the left limit of the price process. The proof of Lemma~\ref{lem:conint} is also given in Appendix~\ref{app:proof_con_compct}.
\begin{lemma}\label{lem:conint}
    Let $x>0$ and consider the sequence $(H^{1,n})_{n\in\mathbb{N}}\subseteq\mathcal{A}(x)$ with canonical decomposition $H^{1,n}=H^{1,n,\uparrow}-H^{1,n,\downarrow}$. Moreover, assume that $H^{1,\uparrow}$ and $H^{1,\downarrow}$ are $[0,\infty]$-valued, increasing, $\mathbb{F}$-predictable processes, which satisfy $P[H_{T}^{1,\uparrow}<\infty]=P[H_{T}^{1,\downarrow}<\infty]=1$,
    such that
    \begin{align}\label{eq:conv_H1_decomp}
        P\big[H_{t}^{1,n,\uparrow}\to H_{t}^{1,\uparrow},\ \forall t\in[0,T]\big]=1,\quad\text{and}\quad P\big[H_{t}^{1,n,\downarrow}\to H_{t}^{1,\downarrow},\ \forall t\in[0,T]\big]=1.
    \end{align}
    Then, for all $\theta\in\Theta$, the sequences $(H^{0,\theta,n,\downarrow})_{n\in\mathbb{N}}$ and $(H^{0,\theta,n,\uparrow})_{n\in\mathbb{N}}$
    defined via \eqref{eq:H0_theta}, i.e.,
    \begin{align}\label{eq:H_0_n_down_and_H_0_n_up}
        H_{t}^{0,\theta,n,\downarrow}=\int_{0}^{t}S_{u}^{\theta}\diff H_{u}^{1,n,\uparrow},\quad\text{and}\quad H_{t}^{0,\theta,n,\uparrow}=\int_{0}^{t}(1-\lambda)S_{u}^{\theta}\diff H_{u}^{1,n,\downarrow},
    \end{align}
    converge for almost every $\omega$ for every $t\in[0,T]$ to the processes $H_{t}^{0,\theta,\downarrow}=\int_{0}^{t}S_{u}^{\theta}\diff H_{u}^{1,\uparrow}$ and $H_{t}^{0,\theta,\uparrow}=\int_{0}^{t}(1-\lambda)S_{u}^{\theta}\diff H_{u}^{1,\downarrow}$, respectively, i.e.,
    \begin{align}\label{eq:conv_H0_decomp}
        P\big[H_{t}^{0,\theta,n,\uparrow}\to H_{t}^{0,\theta,\uparrow},\ \forall t\in[0,T]\big]=1,\quad\text{and}\quad P\big[H_{t}^{0,\theta,n,\downarrow}\to H_{t}^{0,\theta,\downarrow},\ \forall t\in[0,T]\big]=1.
    \end{align}
    In particular, $H^{0,\theta,\uparrow}$ and $H^{0,\theta,\downarrow}$ are $[0,\infty]$-valued, increasing and $\mathbb{F}$-predictable processes satisfying $P[H_{T}^{0,\theta,\uparrow}<\infty]=P[H_{T}^{0,\theta,\downarrow}<\infty]=1$, and therefore the process
    \begin{align*}
        H^{1}=H^{1,\uparrow}\mathbbm{1}_{\llbracket 0,\sigma_{1}\wedge\sigma_{2}\llbracket}\mathbbm{1}_{\llbracket 0,\sigma_{3}\rrbracket}-H^{1,\downarrow}\mathbbm{1}_{\llbracket 0,\sigma_{1}\wedge\sigma_{2}\llbracket}\mathbbm{1}_{\llbracket 0,\sigma_{3}\rrbracket},
    \end{align*}
    $\sigma_{1}\coloneqq\inf\{t>0\colon H_{t-}^{1,\uparrow}=\infty\text{ or }H_{t-}^{1,\downarrow}=\infty\}$, $\sigma_{2}\coloneqq\inf\{t>0\colon \Delta H_{t}^{1,\uparrow}=\infty\text{ or }\Delta H_{t}^{1,\downarrow}=\infty\}$, $\sigma_{3}\coloneqq\inf\{t>0\colon \Delta_{+}H_{t}^{1,\uparrow}=\infty\text{ or }\Delta_{+}H_{t}^{1,\downarrow}=\infty\}$, 
    satisfies $H^{1}\in\mathcal{A}(x)$.
\end{lemma}
We are now ready to prove our main result.
\begin{proof}[Proof of Theorem \ref{thm:robust_primal_problem}]
    Let $(H^{1,n})_{n\in\mathbb{N}}\subseteq\mathcal{A}(x)$ be a maximising sequence, i.e.,
    \begin{align*}
        \inf_{\theta\in\Theta}\E\big[U\big(V_{T}^{\mathrm{liq}}\big(\theta,H^{1,n}\big)\big)\big]\nearrow u(x),\quad\text{as }n\to\infty.
    \end{align*}
    By Proposition~\ref{prop:cc_general} there is a sequence $(\widetilde{H}^{1,n})_{n\in\mathbb{N}}\subseteq\mathrm{conv}(H^{1,n},H^{1,n+1},\dots)$ and $\widehat{H}^{1}\in\mathcal{A}(x)$, such that, for almost every $\omega\in\Omega$, we have that $\widetilde{H}^{1,n}_t$ converges for every $t\in[0,T]$ to $\widehat{H}^{1}_t$. Since the utility function is concave and increasing, and because $H^{1}\mapsto V_{T}^{\mathrm{liq}}(\theta,H^{1})$ is concave, we obtain that $(\widetilde{H}^{1,n})_{n\in\mathbb{N}}$ is also a maximising sequence, because
    \begin{align*}
        \inf_{\theta\in\Theta}\E\big[U\big(V_{T}^{\mathrm{liq}}\big(\theta,\widetilde{H}^{1,n}\big)\big)\big]\geq\inf_{\theta\in\Theta}\E\big[U\big(V_{T}^{\mathrm{liq}}\big(\theta,H^{1,n}\big)\big)\big]\to u(x),\quad\text{as }n\to\infty.
    \end{align*}
    We claim that $\widehat{H}^{1}$ is an optimal solution to \eqref{eq:primalproblem}. For this purpose, we denote by $U^{+}$ and $U^{-}$ the positive and negative parts of the function $U$. Since  for almost every $\omega\in\Omega$, we have that $\widetilde{H}^{1,n}_t\to\widehat{H}^{1}_t$ for every $t\in[0,T]$, we get that $V_{T}^{\mathrm{liq}}(\theta,\widetilde{H}^{1,n})\to V_{T}^{\mathrm{liq}}(\theta,\widehat{H}^{1})$ almost surely for each $\theta\in\Theta$ by Lemma~\ref{lem:conint}. Hence, from Fatou's lemma, we deduce that
    \begin{align*}
        \liminf_{n\to\infty}\E\big[U^{-}\big(V_{T}^{\mathrm{liq}}\big(\theta,\widetilde{H}^{1,n}\big)\big)\big]\geq\E\big[U^{-}\big(V_{T}^{\mathrm{liq}}\big(\theta,\widehat{H}^{1}\big)\big)\big],
    \end{align*}
    for each model $\theta\in\Theta$. The optimality of $\widehat{H}$ will follow if we show that
    \begin{align}\label{eq:L1_boundedness_U+}
        \lim_{n\to\infty}\E\big[U^{+}\big(V_{T}^{\mathrm{liq}}\big(\theta,\widetilde{H}^{1,n}\big)\big)\big]=\E\big[U^{+}\big(V_{T}^{\mathrm{liq}}\big(\theta,\widehat{H}^{1}\big)\big)\big],
    \end{align}
    for each model $\theta\in\Theta$. If $U(\infty)\leq 0$, then there is nothing to prove. So we assume that $U(\infty)>0$.

    We claim that the sequence $(U^{+}(V_{T}^{\mathrm{liq}}(\theta,\widetilde{H}^{1,n})))_{n\in\mathbb{N}}$ is uniformly integrable for each $\theta\in\Theta$. This assertion is equivalent to the validity of \eqref{eq:L1_boundedness_U+}. Suppose, by contradiction, that the sequence is not uniformly integrable for some $\theta$. Then, using the characterization of uniform integrability given in \cite{doob1994}, Theorem VI.16, and passing if necessary to a subsequence still denoted by $(\widetilde{H}^{1,n})_{n\in\mathbb{N}}$, we can find a constant $\alpha>0$ and disjoint sets $A_{n}\in\mathcal{F}$, $n\in\mathbb{N}$, such that
    \begin{align*}
        \E\big[U^{+}\big(V_{T}^{\mathrm{liq}}\big(\theta,\widetilde{H}^{1,n}\big)\big)\mathbbm{1}_{A_{n}}\big]\geq\alpha\quad\text{for }n\geq 1.
    \end{align*}
    We define the sequence of random variables $(h^{n})_{n\in\mathbb{N}}$ by
    \begin{align*}
        h^{n}\coloneqq x_{0}+\sum_{k=1}^{n}V_{T}^{\mathrm{liq}}\big(\theta,\widetilde{H}^{1,k}\big)\mathbbm{1}_{A_{k}},
    \end{align*}
    where $x_{0}=\inf\big\{x>0\colon U(x)\geq 0\big\}$. It follows immediately that
    \begin{align*}
        \E[U(h^{n})]\geq\sum_{k=1}^{n}\E\big[U^{+}\big(V_{T}^{\mathrm{liq}}\big(\theta,\widetilde{H}^{1,k}\big)\big)\mathbbm{1}_{A_{k}}\big]\geq n\alpha.
    \end{align*}
    On the other hand, for any $f\in\mathcal{D}^{\theta}(1)$ we have by Proposition \ref{prop:strong_supermartingale} that
    \begin{align*}
        \E[h^{n}f]\leq x_{0}+\sum_{k=1}^{n}\E\big[V_{T}^{\mathrm{liq}}\big(\theta,\widetilde{H}^{1,k}\big)f\big]\leq x_{0}+nx.
    \end{align*}
    Hence, by Lemma \ref{lem:bipolar_relation_model_based}, we obtain $h^{n}\in\mathcal{C}^{\theta}(x_{0}+nx)$. Therefore, we have
    \begin{align*}
        \limsup_{x\to\infty}\frac{u^{\theta}(x)}{x}\geq\limsup_{n\to\infty}\frac{\E[U(h^{n})]}{x_{0}+nx}\geq\limsup_{n\to\infty}\frac{n\alpha}{x_{0}+nx}=\frac{\alpha}{x}>0.
    \end{align*}
    By Remark \ref{rem:equiv_assump_dualproblem}, this strict inequality is a contradiction to our Assumptions \ref{assump:CPS_all} and \ref{assump:AEU_primalrpoblem}. As a result, $(U^{+}(V_{T}^{\mathrm{liq}}(\theta,\widetilde{H}^{1,n})))_{n\in\mathbb{N}}$ indeed is a uniformly integrable sequence for each $\theta\in\Theta$.

    As before, we use that $V_{T}^{\mathrm{liq}}(\theta,\widetilde{H}^{1,n})\to V_{T}^{\mathrm{liq}}(\theta,\widehat{H}^{1})$ almost surely for each $\theta\in\Theta$ by Lemma~\ref{lem:conint}. Therefore, Fatou's lemma and uniform integrability imply
    \begin{align*}
        \limsup_{n\to\infty}\Big(\inf_{\theta\in\Theta}\E\big[U\big(V_{T}^{\mathrm{liq}}\big(\theta,\widetilde{H}^{1,n}\big)\big)\big]\Big)&\leq\inf_{\theta\in\Theta}\Big(\limsup_{n\to\infty}\E\big[U\big(V_{T}^{\mathrm{liq}}\big(\theta,\widetilde{H}^{1,n}\big)\big)\big]\Big)\\[0.5em]
        &\leq\inf_{\theta\in\Theta}\E\big[U\big(V_{T}^{\mathrm{liq}}\big(\theta,\widehat{H}^{1}\big)\big)\big],
    \end{align*}
    which proves that $\widehat{H}^{1}$ is an optimal solution to \eqref{eq:primalproblem}.

    In the case where $U$ is bounded from above, we use Fatou's lemma to get
    \begin{align*}
        \limsup_{n\to\infty}\E\big[U^{+}\big(V_{T}^{\mathrm{liq}}\big(\theta,\widetilde{H}^{1,n}\big)\big)\big]\leq\E\big[U^{+}\big(V_{T}^{\mathrm{liq}}\big(\theta,\widehat{H}^{1}\big)\big)\big]
    \end{align*}
    instead of \eqref{eq:L1_boundedness_U+}. Hence, the existence of (at least) one $\theta'\in\Theta$ so that $S^{\theta'}$ satisfies $(\mathrm{CPS}^{\lambda'})$ locally  for some $0<\lambda'<\lambda$ is enough to obtain the optimal strategy and to conclude the proof.
\end{proof}

\begin{appendices}
\section{Proofs for Section~\ref{sec:proof_main_result}}\label{app:proof_con_compct}
We begin with the crucial result to prove Lemma \ref{lem:bipolar_relation_model_based}. It was initially proven in \citet{czichowsky2016} (see Lemma~A.1), but under the stronger assumption $(\mathrm{CPS}^{\lambda'})$ locally for all $\lambda'\in(0,\lambda)$. We call a set $\mathcal{G}\subseteq L_{+}^{0}(P)$ \emph{solid}, if $0\leq f\leq g$ and $g\in\mathcal{G}$ imply that $f\in\mathcal{G}$. We further denote $\mathcal{C}^{\theta}\coloneqq\mathcal{C}^{\theta}(1)$ and $\mathcal{D}^{\theta}=\mathcal{D}^{\theta}(1)$, where $\mathcal{C}^{\theta}(x)$ and $\mathcal{D}^{\theta}(y)$ are as defined in \eqref{def:C} and \eqref{def:D}, respectively.
\begin{lemma}\label{lem:C(x)_convex_solid_closed}
    Suppose that $S^{\theta}$ satisfies $(\mathrm{CPS}^{\lambda'})$ locally for some $\lambda'\in(0,\lambda)$. Then, the sets $\mathcal{C}^{\theta}$ and $\mathcal{D}^{\theta}$
    have the following properties:
    \begin{enumerate}
        \item $\mathcal{C}^{\theta}$ and $\mathcal{D}^{\theta}$ are subsets of $L_{+}^{0}(\Omega,\cF,P)$ which are convex, solid and closed in the topology of convergence in measure.\label{itm:conv_solid_clsd}
        \item $\mathcal{C}^{\theta}$ and $\mathcal{D}^{\theta}$ satisfy the bipolarity relation
        \begin{align*}
            g\in\mathcal{C}^{\theta}\iff\E[gh]\leq 1\quad\forall h\in\mathcal{D}^{\theta},\\
            h\in\mathcal{D}^{\theta}\iff\E[gh]\leq 1\quad\forall g\in\mathcal{C}^{\theta}.
        \end{align*}\label{itm:bipolarity}
        \item $\mathcal{C}^{\theta}$ is a bounded subset of $L^{0}(\Omega,\cF,P)$ and contains the constant function $1$.\label{itm:boundedness}
    \end{enumerate}
\end{lemma}
\begin{proof}
    We start with the proof of \eqref{itm:conv_solid_clsd}. The set $\mathcal{C}^{\theta}$ is convex and solid by definition. To prove the closedness of $\mathcal{C}^{\theta}$, let $(H^{n})_{n\in\mathbb{N}}=(H^{0,n},H^{1,n})_{n\in\mathbb{N}}\subseteq\mathcal{H}^{\theta}(1)$ be a sequence of admissible, self-financing trading strategies, such that the sequence $(g_{n})_{n\in\mathbb{N}}\subseteq\mathcal{C}^{\theta}$ with $g_{n}\coloneqq V_{T}^{\mathrm{liq}}(\theta,H^{n})$ converges to some $g\in L_{+}^{0}(P)$ as $n\to\infty$. By passing to a dominating pair $(H^{0,n},H^{1,n})_{n\in\mathbb{N}}$ where equality holds true in equation~\eqref{eq:selffin}, Proposition~\ref{prop:cc_general} together with Lemma~\ref{lem:conint} (assuming $\Theta$ consists only of one model $\theta$) guarantees the existence of a sequence of convex combinations $(\widetilde{H}^{0,n},\widetilde{H}^{1,n})\in\mathrm{conv}((H^{0,n},H^{1,n}),(H^{0,n+1},H^{1,n+1}),\dots)$ and a trading strategy $H=(H^{0},H^{1})\in\mathcal{H}^{\theta}(1)$ satisfying
    \begin{align*}
        P[(\widetilde{H}_{t}^{0,n},\widetilde{H}_{t}^{1,n})\to(H_{t}^{0},H_{t}^{1}),\>\forall t\in[0,T]]=1.
    \end{align*}
    Since $V_{T}^{\mathrm{liq}}(\theta,H^{n})\leq V_{T}^{\mathrm{liq}}(\theta,H^{1,n})$, where $V_{T}^{\mathrm{liq}}(\theta,H^{1,n})$ describes the liquidation value at time $T$ for a dominating pair $(H^{0,n},H^{1,n})$, i.e., where $H^{0,n}$ satisfies equation~\eqref{eq:H0_theta}, we thus obtain
    \begin{align*}
        g&=\lim_{n\to\infty}g_{n}=\lim_{n\to\infty}V_{T}^{\mathrm{liq}}(\theta,H^{n})\leq\lim_{n\to\infty}V_{T}^{\mathrm{liq}}(\theta,H^{1,n})\leq\lim_{n\to\infty}V_{T}^{\mathrm{liq}}(\theta,\widetilde{H}^{1,n})=V_{T}(\theta,H^{1}),
    \end{align*}
    and therefore $g\in\mathcal{C}^{\theta}$. This proves that $\mathcal{C}^{\theta}$ is closed.

    For $\mathcal{D}^{\theta}$ it is enough to note that this set is exactly the polar $(\mathcal{C}^{\theta})^{\circ}$ of $\mathcal{C}^{\theta}$ in $L_{+}^{0}(P)$ as defined in Definition~1.2 of \cite{brannath1999}. That is, for a subset $G\subseteq L_{+}^{0}(P)$, we define its polar by $G^{\circ}=\{h\in L_{+}^{0}(P)\colon\E[gh]\leq 1\text{ for all }g\in G\}$. Hence, it follows from the Bipolar Theorem in $L_{+}^{0}(P)$ by \citet{brannath1999} (see Theorem~1.3) that $\mathcal{D}^{\theta}$ is closed, convex and solid.

    Let us now prove assertion \eqref{itm:bipolarity}. As before, we use that the set $\mathcal{D}^{\theta}$ is the polar  $(\mathcal{C}^{\theta})^{\circ}$ of $\mathcal{C}^{\theta}$ in $L_{+}^{0}(P)$. Hence, by the Bipolar Theorem~1.3 in \cite{brannath1999} the set
    \begin{align*}
        \{g\in L_{+}^{0}(\Omega,\cF,P)\colon\E[gh]\leq 1\>\text{ for each }h\in\mathcal{D}^{\theta}\}
    \end{align*}
    is the smallest closed, convex and solid set in $L_{+}^{0}(\Omega,\cF,P)$ that contains $\mathcal{C}^{\theta}$. Because $\mathcal{C}^{\theta}$ is a convex, closed and solid subset of $L_{+}^{0}(P)$ by assertion~\eqref{itm:conv_solid_clsd}, it follows from the Bipolar Theorem in $L_{+}^{0}(P)$ that $\mathcal{C}^{\theta}$ coincides with its bipolar $(\mathcal{C}^{\theta})^{\circ\circ}$. Since $\mathcal{D}^{\theta}$ is the polar $(\mathcal{C}^{\theta})^{\circ}$, we therefore have that $g\in L_{+}^{0}(P)$ is in $\mathcal{C}^{\theta}$, if and only if $\E[gh]\leq 1$ for all $h\in\mathcal{D}^{\theta}$.
    The second statement in \eqref{itm:bipolarity}, i.e., $h\in\mathcal{D}^{\theta}\iff\E[gh]\leq 1$ for all $g\in\mathcal{C}^{\theta}$, follows by the definition of $\mathcal{D}^{\theta}$.

    It remains to prove assertion~\eqref{itm:boundedness}. The fact that $\mathcal{C}^{\theta}$ contains the constant function $1$ is an immediate consequence of its definition. In particular, we can always take a trading strategy that does not trade in the risky asset, such that $V_{T}^{\mathrm{liq}}(\theta,H)=1$. The $L^{0}(P)$-boundedness of $\mathcal{C}^{\theta}$ is by assertion~\eqref{itm:bipolarity} equivalent to the existence of a strictly positive element $g\in\mathcal{D}^{\theta}$. Hence, it is enough to note that $\{Z^0_T\colon(Z^0,Z^1)\in\mathcal{Z}_{\mathrm{loc}}^{\theta}\}\subseteq\mathcal{D}^{\theta}$, which follows by Assumption~\ref{assump:CPS_all}.
\end{proof}
We continue with a result that was originally proven in \citet{campi2006} in the setting of \citet{kabanov2009}. It provides an a posteriori, quantitative control on the size of the total variation of admissible trading strategies. We use a slightly adjusted version of this result to cover model independent trading strategies. Our proof is mainly the same as the one of Lemma 3.1 in \cite{schachermayer2014} (see also \cite{schachermayer2017}, Lemma 4.10). We also refer to Remark 3.2 in \cite{schachermayer2014} which comes together with the following result.
\begin{prop}\label{prop:bounded_var_Ax}
    Let $x>0$. For some $\theta'\in\Theta$ and for some $0<\lambda'<\lambda$, assume that $S^{\theta'}$ satisfies $(\mathrm{CPS}^{\lambda'})$ locally. Then, for any strategy $H^{1}\in\cA(x)$ with canonical decomposition $H^{1}=H^{1,\uparrow}-H^{1,\downarrow}$, the elements $H_{T}^{1,\uparrow}$ and $H_{T}^{1,\downarrow}$ as well as their convex combinations are bounded in $L^{0}(\Omega,\cF,P)$.
\end{prop}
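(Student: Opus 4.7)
The plan is to exploit the slack $\lambda - \lambda' > 0$ between the actual transaction cost and the level at which a consistent price system exists, via the shadow-price interpretation. Fix $\theta' \in \Theta$, a locally $\lambda'$-consistent price system $(Z^{0,\theta'}, Z^{0,\theta'}\widetilde S^{\theta'})$ and $H^1 \in \mathcal{A}(x)$ with Jordan decomposition $H^1 = H^{1,\uparrow} - H^{1,\downarrow}$. Set $\widetilde V_t := H_t^0 + H_t^1 \widetilde S_t^{\theta'}$. A direct comparison using $\widetilde S^{\theta'} \in [(1-\lambda)S^{\theta'}, S^{\theta'}]$ shows $\widetilde V_t \geq V_t^{\mathrm{liq}}(\theta',H) \geq 0$ for all $t$. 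Proposition~\ref{prop:strong_supermartingale} yields that $Z^{0,\theta'}\widetilde V$ is an optional strong supermartingale, which after standard localisation gives $\E[Z_T^{0,\theta'} \widetilde V_T] \leq x$.

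To bound $(H_T^1)^+$, I would compute
\[ \widetilde V_T - V_T^{\mathrm{liq}} = (H_T^1)^+\bigl[\widetilde S_T^{\theta'} - (1-\lambda)S_T^{\theta'}\bigr] + (H_T^1)^-\bigl[S_T^{\theta'} - \widetilde S_T^{\theta'}\bigr] \geq (\lambda - \lambda')(H_T^1)^+ S_T^{\theta'}, \]
where the final inequality uses $\widetilde S^{\theta'} \geq (1-\lambda')S^{\theta'}$ from the definition of a $\lambda'$-CPS. Combined with $V_T^{\mathrm{liq}} \geq 0$ and $\E[Z_T^{0,\theta'} \widetilde V_T] \leq x$, this yields $\E[Z_T^{0,\theta'} (H_T^1)^+ S_T^{\theta'}] \leq x/(\lambda-\lambda')$. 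To bound $H_T^{1,\downarrow}$, I would apply integration by parts to $H^1 \widetilde S^{\theta'}$ in the paper's integral convention from Section~7 of \cite{czichowsky2016b} to derive the shadow-price identity
\[ \widetilde V_T = x + \int_0^T H_{u-}^1 d\widetilde S_u^{\theta'} - \int_0^T \bigl(S_u^{\theta'} - \widetilde S_u^{\theta'}\bigr) dH_u^{1,\uparrow} - \int_0^T \bigl(\widetilde S_u^{\theta'} - (1-\lambda)S_u^{\theta'}\bigr) dH_u^{1,\downarrow}. \]
Both friction integrals are non-negative, and $\widetilde V_t \geq 0$ forces $\int_0^\cdot H_{u-}^1 d\widetilde S_u^{\theta'} \geq -x$; being a local $Q^{\theta'}$-martingale bounded below, it is a $Q^{\theta'}$-supermartingale by Fatou. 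Taking $Q^{\theta'}$-expectation of the identity and using $\widetilde S^{\theta'} - (1-\lambda)S^{\theta'} \geq (\lambda-\lambda')S^{\theta'}$ then yields $\E\bigl[Z_T^{0,\theta'}\int_0^T S_u^{\theta'} dH_u^{1,\downarrow}\bigr] \leq x/(\lambda-\lambda')$.

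To convert these $L^1(Q^{\theta'})$-estimates to $L^0(P)$-boundedness, I would use that $S^{\theta'}$ is c\`adl\`ag and strictly positive on the compact interval $[0,T]$, so $m := \inf_{[0,T]} S^{\theta'} > 0$ $P$-a.s. Then $H_T^{1,\downarrow} \leq m^{-1}\int_0^T S_u^{\theta'} dH_u^{1,\downarrow}$ and $(H_T^1)^+ \leq m^{-1}(H_T^1)^+ S_T^{\theta'}$ are bounded in $L^0(P)$ (since $Q^{\theta'} \sim P$), and hence so is $H_T^{1,\uparrow} = H_T^1 + H_T^{1,\downarrow} \leq (H_T^1)^+ + H_T^{1,\downarrow}$. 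Since the underlying estimates are $L^1(Q^{\theta'})$-bounds with constants independent of $H^1$, convex hulls are also $L^1(Q^{\theta'})$-bounded by the same constants, and therefore $L^0(P)$-bounded. The main obstacle is making the shadow-price identity rigorous under the paper's integral convention (which treats left and right jumps differently) in the incomplete filtration setting, and carrying out the localisation passage from the "locally $\lambda'$-CPS" setting to the genuine $Q^{\theta'}$-supermartingale statement used above, including defining the stochastic integral $\int_0^\cdot H_{u-}^1 d\widetilde S_u^{\theta'}$ when $H^1$ is predictable of finite variation but not necessarily locally bounded.
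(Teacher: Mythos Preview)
Your approach is conceptually sound but genuinely different from the paper's, and the ``main obstacle'' you flag is precisely what the paper's argument is designed to avoid.

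The paper does not use the shadow-price integration-by-parts identity at all. Instead, it first reduces without loss of generality to $H^{1}_T=0$ (liquidate at $T$; this only increases $H^{1,\uparrow}_T$ and $H^{1,\downarrow}_T$), and then constructs a \emph{modified} strategy
\[
H'_t=\Big(H_t^{0,\theta'}+\tfrac{\lambda-\lambda'}{1-\lambda}\,H_t^{0,\theta',\uparrow},\,H_t^{1}\Big),
\]
which is checked to be self-financing and admissible under the \emph{smaller} cost $\lambda'$. Applying Proposition~\ref{prop:strong_supermartingale} directly to $H'$ and the $\lambda'$-CPS yields the bound $\E_{Q^{\theta'}}[H_T^{0,\theta',\uparrow}]\leq x/(\lambda-\lambda')$ with no integration by parts. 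From $H^{1}_T=0$ one gets $H_T^{0,\theta'}\geq -x$, hence a bound on $H_T^{0,\theta',\downarrow}$; the relation $\diff H^{1,\uparrow}=\diff H^{0,\theta',\downarrow}/S^{\theta'}$ then transfers these to $H^{1,\uparrow}_T=H^{1,\downarrow}_T$. Everything reduces to one application of the supermartingale property to a strategy, exactly the black box the paper already has.

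Your route via the identity
\[
\widetilde V_T = x + \int_0^T H_{u-}^{1}\,d\widetilde S_u^{\theta'} - \int_0^T (S_u^{\theta'}-\widetilde S_u^{\theta'})\,dH_u^{1,\uparrow} - \int_0^T (\widetilde S_u^{\theta'}-(1-\lambda)S_u^{\theta'})\,dH_u^{1,\downarrow}
\]
is the natural ``friction-decomposition'' argument and would work, but the gap you name is real: the paper's pathwise integral $\int S\,dH$ mixes $S_{u-}$ at left jumps and $S_u$ at right jumps of $H$, so the standard It\^o product rule does not give this identity on the nose; one has to redo the bookkeeping of \cite[Section~7]{czichowsky2016b} to see that all cross-terms cancel correctly. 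You would also need to argue that $\int_0^\cdot H^{1}_{u-}\,d\widetilde S_u^{\theta'}$ is well defined and a $Q^{\theta'}$-supermartingale, which requires localising along $\sigma_n=\inf\{t:|H_t^{1}|>n\}$ and checking these are genuine stopping times in the possibly incomplete filtration. None of this is insurmountable, but it is substantially more work than the paper's one-line modified-strategy trick, which packages the same $\lambda-\lambda'$ slack into an admissible object to which Proposition~\ref{prop:strong_supermartingale} applies verbatim.
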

\begin{proof}
    Fix $0<\lambda'<\lambda$. For some $\theta'\in\Theta$ there is by assumption a probability measure $Q^{\theta'}\approx P$ and a local $Q^{\theta'}$-martingale $\widetilde{S}^{\theta'}=(\widetilde{S}_{t}^{\theta'})_{0\leq t\leq T}$ satisfying \eqref{eq:CPS}. By stopping, we may assume that $\widetilde{S}^{\theta'}$ is a true martingale.

    We consider a strategy $H^{1}\in\cA(x)$, for some $x>0$. Without loss of generality we assume that $H_{T}^{1}=0$, i.e., that the position is liquidated at time $T$. For each $\theta\in\Theta$, using \eqref{eq:H0_theta}, we obtain the holdings of the bond $H_{t}^{0,\theta}=x+H_{t}^{0,\theta,\uparrow}-H_{t}^{0,\theta,\downarrow}$ via $\diff H_{t}^{0,\theta,\uparrow}=(1-\lambda)S_{t}^{\theta}\diff H_{t}^{1,\downarrow}$ and $\diff H_{t}^{0,\theta,\downarrow}=S_{t}^{\theta}\diff H_{t}^{1,\uparrow}$ with $H_{0}^{0,\theta,\uparrow}=H_{0}^{0,\theta,\downarrow}=0$. Now, working with $\theta'$ as introduced before, we first show that
    \begin{equation}\label{eq:L1_bounded_Q}
        \E_{Q^{\theta'}}\big[H_{T}^{0,\theta',\uparrow}\big]\leq\frac{x}{\lambda-\lambda'}.
    \end{equation}
    For this purpose, define the process $H'=((H^{0,\theta'})',(H^{1})')$ by
    \begin{align*}
        H'_{t}=((H^{0,\theta'})'_{t},(H^{1})'_{t})=\Big(H_{t}^{0,\theta'}+\frac{\lambda-\lambda'}{1-\lambda}H_{t}^{0,\theta',\uparrow},H_{t}^{1}\Big),\qquad 0\leq t\leq T.
    \end{align*}
    This process is a self-financing trading strategy under transaction costs $\lambda'$: indeed, whenever we have $\diff H_{t}^{0,\theta'}>0$ such that $\diff H_{t}^{0,\theta'}=\diff H_{t}^{0,\theta',\uparrow}$, the agent sells some units of stock and receives $\diff H_{t}^{0,\theta',\uparrow}=(1-\lambda)S_{t}^{\theta'}\diff H_{t}^{1,\downarrow}$ (resp., $(1-\lambda')S_{t}^{\theta'}\diff H_{t}^{1,\downarrow}=\frac{1-\lambda'}{1-\lambda}\diff H_{t}^{0,\theta',\uparrow}$) many bonds under transaction costs $\lambda$ (resp., $\lambda'$). The difference between these two terms is $\frac{\lambda-\lambda'}{1-\lambda}\diff H_{t}^{0,\theta',\uparrow}$; this difference is the amount by which the $\lambda'$-agent does better than the $\lambda$-agent. It is also clear that $((H^{0,\theta'})',(H^{1})')$ under transaction costs $\lambda'$ still is an $x$-admissible strategy for the model $\theta'$, i.e., $H'\in\mathcal{H}^{\theta'}(x)$.

    By Proposition \ref{prop:strong_supermartingale}, the process $\widetilde{V}(\theta',H')=(\widetilde{V}_{t}(\theta',H'))_{0\leq t\leq T}$ defined by
    \begin{align*}
        \widetilde{V}_{t}(\theta',H')=(H^{0,\theta'})'_{t}+(H^{1})'_{t}\widetilde{S}_{t}^{\theta'}=H_{t}^{0,\theta'}+\frac{\lambda-\lambda'}{1-\lambda}H_{t}^{0,\theta',\uparrow}+H_{t}^{1}\widetilde{S}_{t}^{\theta'}
    \end{align*}
    is an optional strong $Q^{\theta'}$-supermartingale. We thus get
    \begin{align*}
        \E_{Q^{\theta'}}\big[\widetilde{V}_{T}\big(\theta',H'\big)\big]=\E_{Q^{\theta'}}\big[H_{T}^{0,\theta'}+H_{T}^{1}\widetilde{S}_{T}^{\theta'}\big]+\frac{\lambda-\lambda'}{1-\lambda}\E_{Q^{\theta'}}\big[H_{T}^{0,\theta',\uparrow}\big]\leq x.
    \end{align*}
    By admissibility of $H^{1}$, we have $H_{T}^{0,\theta'}+H_{T}^{1}\widetilde{S}_{T}^{\theta'}\geq 0$, and thus we have shown \eqref{eq:L1_bounded_Q}.

    To obtain control on $H_{T}^{0,\theta',\downarrow}$ too, we note that $H_{T}^{0,\theta'}\geq -x$, since $H_{T}^{1}=0$. So we have $H_{T}^{0,\theta',\downarrow}\leq x+H_{T}^{0,\theta',\uparrow}$. Therefore, we obtain the following estimate for the total variation $H_{T}^{0,\theta',\uparrow}+H_{T}^{0,\theta',\downarrow}$ of $H^{0,\theta'}$:
    \begin{align}\label{eq:L1Q_estimate}
        \E_{Q^{\theta'}}\big[H_{T}^{0,\theta',\uparrow}+H_{T}^{0,\theta',\downarrow}\big]\leq x\Big(1+\frac{2}{\lambda-\lambda'}\Big).
    \end{align}
    Now we transfer the $L^{1}(Q^{\theta'})$-estimate in \eqref{eq:L1Q_estimate} to an $L^{0}(P)$-estimate. For $\varepsilon>0$, there exists $\delta_{\theta'}>0$, so that for $A\in\mathcal{F}$ with $Q^{\theta'}[A]<\delta_{\theta'}$, we have $P[A]<\frac{\varepsilon}{2}$. Letting $C^{0,\theta'}\coloneqq\frac{x}{\delta_{\theta'}}\big(1+\frac{2}{\lambda-\lambda'}\big)$ and applying Markov's inequality to \eqref{eq:L1Q_estimate}, we get
    \begin{align}\label{eq:L0P_estimate}
        P\big[H_{T}^{0,\theta',\uparrow}+H_{T}^{0,\theta',\downarrow}\geq C^{0,\theta'}\big]<\frac{\varepsilon}{2},
    \end{align}
    which is the desired $L^{0}(P)$-estimate. At this point we remark that \eqref{eq:L1_bounded_Q} implies that the convex hull of the functions $H_{T}^{0,\theta',\uparrow}$ is bounded in $L^{1}(Q^{\theta'})$ and \eqref{eq:L1Q_estimate} yields the same for $H_{T}^{0,\theta',\downarrow}$. So by the above reasoning we obtain that also the convex combinations of $H_{T}^{0,\theta',\uparrow}$ and $H_{T}^{0,\theta',\downarrow}$ remain bounded in $L^{0}(P)$.\\

    As before, it follows from \eqref{eq:H0_theta} that $\diff H_{t}^{0,\theta',\downarrow}=S_{t}^{\theta'}\diff H_{t}^{1,\uparrow}$, which can be rewritten as
    \begin{equation}\label{eq:control_H1}
        \diff H_{t}^{1,\uparrow}=\frac{\diff H_{t}^{0,\theta',\downarrow}}{S_{t}^{\theta'}}.
    \end{equation}
    By assumption, the trajectories of $S^{\theta'}$ are strictly positive. In fact, we even have for almost all trajectories $(S_{t}^{\theta'}(\omega))_{0\leq t\leq T}$, that $\inf_{0\leq t\leq T} S_{t}^{\theta'}(\omega)$ is strictly positive. Indeed, $\widetilde{S}^{\theta'}$ being a $Q^{\theta'}$-martingale with $\widetilde{S}_{T}^{\theta'}>0$ almost surely satisfies that $\inf_{0\leq t\leq T}\widetilde{S}_{t}^{\theta'}$ is $Q^{\theta'}$-a.s. and therefore $P$-a.s. strictly positive. In particular, for $\varepsilon>0$ we may find $\delta'_{\theta'}>0$ such that
    \begin{equation}\label{eq:S_nonnegative}
        P\Big[\inf_{0\leq t\leq T}S_{t}^{\theta'}<\delta'_{\theta'}\Big]<\frac{\varepsilon}{2}.
    \end{equation}
    Taking $\eta_{\theta'}\coloneqq\min(\delta_{\theta'},\delta'_{\theta'})$ and letting $C^{1,\theta'}\coloneqq\frac{x}{\eta_{\theta'}^{2}}\big(1+\frac{2}{\lambda-\lambda'}\big)$, we obtain from \eqref{eq:L0P_estimate}, \eqref{eq:control_H1} and \eqref{eq:S_nonnegative} that
    \begin{equation}\label{eq:L0P_H1_theta_estimate}
        P\Big[H_{T}^{1,\uparrow}\geq C^{1,\theta'}\Big]\leq P\Big[\inf_{0\leq t\leq T}S_{t}^{\theta'}<\delta'_{\theta'}\Big]+P\big[H_{T}^{0,\theta',\uparrow}+H_{T}^{0,\theta',\downarrow}\geq C^{0,\theta'}\big]<\varepsilon.
    \end{equation}
    The first inequality in \eqref{eq:L0P_H1_theta_estimate} follows from
    \begin{align*}
        &P\Big[H_{T}^{1,\uparrow}\geq C^{1,\theta'}\Big]\\
        &\qquad\leq P\Big[H_{T}^{0,\theta',\downarrow}\geq \Big(\inf_{0\leq t\leq T}S_{t}^{\theta'}\Big)C^{1,\theta'}\Big]\\
        &\qquad\leq P\Big[\inf_{0\leq t\leq T}S_{t}^{\theta'}<\delta'_{\theta'}\Big]+P\Big[\Big\{H_{T}^{0,\theta',\downarrow}\geq \Big(\inf_{0\leq t\leq T}S_{t}^{\theta'}\Big)C^{1,\theta'}\Big\}\cap\Big\{\inf_{0\leq t\leq T}S_{t}^{\theta'}\geq\delta'_{\theta'}\Big\}\Big],
    \end{align*}
    which uses \eqref{eq:control_H1} (note that $H_{0}^{0,\theta',\uparrow}=H_{0}^{0,\theta',\downarrow}=0$ by definition) together with
    \begin{align*}
        &P\Big[\Big\{H_{T}^{0,\theta',\downarrow}\geq \Big(\inf_{0\leq t\leq T}S_{t}^{\theta'}\Big)C^{1,\theta'}\Big\}\cap\Big\{\inf_{0\leq t\leq T}S_{t}^{\theta'}\geq\delta'_{\theta'}\Big\}\Big]\\
        &\qquad\leq P\Big[H_{T}^{0,\theta',\downarrow}\geq\delta'_{\theta'}C^{1,\theta'}\Big]\leq P\Big[H_{T}^{0,\theta',\downarrow}\geq C^{0,\theta'}\Big]\leq P\Big[H_{T}^{0,\theta',\uparrow}+H_{T}^{0,\theta',\downarrow}\geq C^{0,\theta'}\Big],
    \end{align*}
    where we use that $\eta_{\theta'}^{2}\leq\delta_{\theta'}\delta'_{\theta'}$, which implies $\delta'_{\theta'}C^{1,\theta'}\geq C^{0,\theta'}$. The second inequality in \eqref{eq:L0P_H1_theta_estimate} then follows from \eqref{eq:L0P_estimate} and \eqref{eq:S_nonnegative}.

    To control the term $H_{T}^{1,\downarrow}$, we observe that $H_{T}^{1,\uparrow}-H_{T}^{1,\downarrow}=H_{T}^{1}=0$. Therefore, we may use the estimate \eqref{eq:L0P_H1_theta_estimate} of $H_{T}^{1,\uparrow}$ to also control $H_{T}^{1,\downarrow}$. Moreover, we note that \eqref{eq:control_H1} also holds for convex combinations of $H^{1,\uparrow}$. Indeed, for another strategy $\widehat{H}^{1}\in\cA(x)$ and $\alpha\in[0,1]$ we have
    \begin{align*}
        S_{t}^{\theta'}\diff((1-\alpha)H_{t}^{1,\uparrow}+\alpha\widehat{H}_{t}^{1,\uparrow})&=(1-\alpha)S_{t}^{\theta'}\diff H_{t}^{1,\uparrow}+\alpha S_{t}^{\theta'}\diff\widehat{H}_{t}^{1,\uparrow}=(1-\alpha)\diff H_{t}^{0,\theta',\downarrow}+\alpha\diff\widehat{H}_{t}^{0,\theta',\downarrow},
    \end{align*}
    so that dividing by $S_{t}^{\theta'}$ yields
    \begin{align*}
        \diff((1-\alpha)H_{t}^{1,\uparrow}+\alpha\widehat{H}_{t}^{1,\uparrow})&=\frac{\diff((1-\alpha)H_{t}^{0,\theta',\downarrow}+\alpha\widehat{H}_{t}^{0,\theta',\downarrow})}{S_{t}^{\theta'}}.
    \end{align*}
    Since \eqref{eq:L0P_estimate} also holds for convex combinations of $H_{T}^{0,\theta',\uparrow}$ and $H_{T}^{0,\theta',\downarrow}$, we obtain that also the convex combinations of $H_{T}^{1,\uparrow}$ and $H_{T}^{1,\downarrow}$ remain bounded in $L^{0}(P)$.
\end{proof}
Next, we establish the proof of Lemma \ref{lem:conint}.
\begin{proof}[Proof of Lemma \ref{lem:conint}]
    Let $\theta\in\Theta$ and fix an arbitrary $\varepsilon>0$. Also fix $\omega\in\Omega$ so that $S^{\theta}(\omega)$ has c{\`a}dl{\`a}g trajectories and such that $H_{t}^{1,n,\uparrow}(\omega)\to H_{t}^{1,\uparrow}(\omega)$ for all $t\in[0,T]$. Since the function $S^{\theta}_{\cdot}(\omega):[0,T]\to\R$ is c{\`a}dl{\`a}g, there can only be finitely many times $0\leq \tau_1<\ldots<\tau_k\leq T$ such that $\big|\Delta S^{\theta}_{\tau_i}(\omega)\big|\geq \varepsilon$. Indeed, if there were infinitely many time points with jumps of size larger than $\varepsilon$, the jump times would have a cluster point in the compact set $[0,T]$, leading to a contradiction to the existence of right or left limits of $S^{\theta}_t(\omega)$ at every $t\in[0,T]$.
    
    Therefore, setting
    \begin{align*}
        S_{t}^{\theta,\varepsilon}(\omega)\coloneqq S_{t}^{\theta}(\omega)-\sum_{i=1}^{k}\Delta S_{\tau_i}^{\theta}(\omega)\mathbbm{1}_{[\tau_i,T]}(t),\quad t\in[0,T],
    \end{align*}
    gives a c{\`a}dl{\`a}g function with $\big|\Delta S^{\theta,\varepsilon}_t(\omega)\big|\leq\varepsilon$ for all $t\in[0,T]$. Then, there are finitely many times $\sigma_0=0<\sigma_1<\ldots<\sigma_{m-1}<T$ such that $\big|S_{\sigma_{i+1}}^{\theta,\varepsilon}(\omega)-S_{\sigma_{i}}^{\theta,\varepsilon}(\omega)\big|>\varepsilon$, because $S^{\theta,\varepsilon}_{\cdot}(\omega):[0,T]\to\R$ is c{\`a}dl{\`a}g. Indeed, if there were infinitely many such time points $(\sigma_i)_{i\in I}$ for some infinite index set $I$, the times $(\sigma_i)_{i\in I}$ would have a cluster point in the compact set $[0,T]$ leading to a contradiction to the existence of right or left limits of $S^{\theta,\varepsilon}_t(\omega)$ at every $t\in[0,T]$. Because the jumps of $S^{\theta,\varepsilon}(\omega)$ are bounded by $\varepsilon$, we obtain $\big|S_{t}^{\theta,\varepsilon}(\omega)-S_{\sigma_{i}}^{\theta,\varepsilon}(\omega)\big|\leq 2\varepsilon$ for all $t\in[\sigma_{i},\sigma_{i+1}]$ for $i=0,\ldots,m-1$ with $\sigma_m:=T$. Therefore, the step functions $S^{\theta,\varepsilon,m}(\omega)$ given by 
    \begin{align*}
        S_{t}^{\theta,\varepsilon,m}(\omega)&\coloneqq S_{0}^{\theta,\varepsilon}(\omega)\mathbbm{1}_{\{0\}}(t)+\sum_{i=1}^{m}S_{\sigma_{i-1}}^{\theta,\varepsilon}(\omega)\mathbbm{1}_{(\sigma_{i-1},\sigma_{i}]}(t),\quad t\in[0,T],
    \end{align*}
satisfy $\big|S_{t}^{\theta,\varepsilon}(\omega)-S_{t}^{\theta,\varepsilon,m}(\omega)\big|\leq 2\varepsilon$ for all $t\in[0,T]$, which implies
    \begin{align*}
        \bigg\vert\int_{0}^{t}\big(S_{u}^{\theta,\varepsilon}(\omega)-S_{u}^{\theta,\varepsilon,m}(\omega)\big)\diff H_{u}^{1,\uparrow,n}(\omega)\bigg\vert&\leq 2\varepsilon H_{T}^{1,\uparrow,n}(\omega),\\
        \bigg\vert\int_{0}^{t}\big(S_{u}^{\theta,\varepsilon}(\omega)-S_{u}^{\theta,\varepsilon,m}(\omega)\big)\diff H_{u}^{1,\uparrow}(\omega)\bigg\vert&\leq 2\varepsilon H_{T}^{1,\uparrow}(\omega).
    \end{align*}
    Moreover, because of our definition of the stochastic integral at left jumps of the integrator (see \eqref{eq:stoch_int_up} and \eqref{eq:stoch_int_down}), we have that
    \begin{align*}
        &\int_{0}^{t}S_{u}^{\theta,\varepsilon,m}(\omega)\diff H_{u}^{1,n,\uparrow}(\omega)\\
        &\qquad=\sum_{i=1}^{m}\int_{(\sigma_{i-1}\wedge t)+}^{\sigma_{i}\wedge t}S_{u}^{\theta,\varepsilon,m}(\omega)\diff H_{u}^{1,n,\uparrow}(\omega)+\sum_{i=1}^{m}S_{\sigma_{i-1}}^{\theta,\varepsilon,m}(\omega)\Delta_{+}H_{\sigma_{i-1}}^{1,n,\uparrow}(\omega)\mathbbm{1}_{\{\sigma_{i-1}<t\}}\\
        &\qquad=\sum_{i=1}^{m}S_{\sigma_{i-1}}^{\theta,\varepsilon}(\omega)(H_{\sigma_{i}\wedge t}^{1,n,\uparrow}(\omega)-H_{(\sigma_{i-1}\wedge t)+}^{1,n,\uparrow}(\omega))+S_{0}^{\theta,\varepsilon}(\omega)\Delta_{+}H_{0}^{1,n,\uparrow}(\omega)\mathbbm{1}_{\{t>0\}}\\
        &\qquad+\sum_{i=2}^{m}S_{\sigma_{i-2}}^{\theta,\varepsilon}(\omega)\Delta_{+}H_{\sigma_{i-1}}^{1,n,\uparrow}(\omega)\mathbbm{1}_{\{\sigma_{i-1}<t\}}\\
        &\qquad=\sum_{i=1}^{m}S_{\sigma_{i-1}}^{\theta,\varepsilon}(\omega)(H_{\sigma_{i}\wedge t}^{1,n,\uparrow}(\omega)-H_{\sigma_{i-1}\wedge t}^{1,n,\uparrow}(\omega))-\sum_{i=1}^{m-1}(S_{\sigma_{i}}^{\theta,\varepsilon}(\omega)-S_{\sigma_{i-1}}^{\theta,\varepsilon}(\omega))\Delta_{+}H_{\sigma_{i}}^{1,n,\uparrow}(\omega)\mathbbm{1}_{\{\sigma_{i}<t\}},
    \end{align*}
    where we use that $\int_{(\sigma_{i-1}\wedge t)+}^{\sigma_{i}\wedge t}S_{u}^{\theta,\varepsilon,m}(\omega)\diff H_{u}^{1,n,\uparrow}(\omega)=S_{\sigma_{i-1}}^{\theta,\varepsilon}(\omega)(H_{\sigma_{i}\wedge t}^{1,n,\uparrow}(\omega)-H_{(\sigma_{i-1}\wedge t)+}^{1,n,\uparrow}(\omega))$, as well as $\Delta_{+}H_{\sigma_{i-1}\wedge t}^{1,n,\uparrow}(\omega)=H_{(\sigma_{i-1}\wedge t)+}^{1,n,\uparrow}(\omega)-H_{\sigma_{i-1}\wedge t}^{1,n,\uparrow}(\omega)$, by definition. Similarly, we obtain
    \begin{align*}
        &\int_{0}^{t}S_{u}^{\theta,\varepsilon,m}(\omega)\diff H_{u}^{1,\uparrow}(\omega)\\
        &\qquad=\sum_{i=1}^{m}S_{\sigma_{i-1}}^{\theta,\varepsilon}(\omega)\big(H_{\sigma_{i}\wedge t}^{1,\uparrow}(\omega)-H_{\sigma_{i-1}\wedge t}^{1,\uparrow}(\omega)\big)-\sum_{i=1}^{m-1}(S_{\sigma_{i}}^{\theta,\varepsilon}(\omega)-S_{\sigma_{i-1}}^{\theta,\varepsilon}(\omega))\Delta_{+}H_{\sigma_{i}}^{1,\uparrow}(\omega)\mathbbm{1}_{\{\sigma_{i}<t\}},
    \end{align*}
    so that
    \begin{align*}
        \int_{0}^{t}S_{u}^{\theta,\varepsilon,m}(\omega)\diff H_{u}^{1,n,\uparrow}(\omega)\to\int_{0}^{t}S_{u}^{\theta,\varepsilon,m}(\omega)\diff H_{u}^{1,\uparrow}(\omega),\quad n\to\infty,
    \end{align*}
     as $H_{t}^{1,n,\uparrow}(\omega)\to H_{t}^{1,\uparrow}(\omega)$ for all $t\in[0,T]$.
    In particular, for $n\in\mathbb{N}$ large enough, we have
    \begin{align*}
        &\bigg\vert\int_{0}^{t}S_{u}^{\theta,\varepsilon}(\omega)\diff H_{u}^{1,n,\uparrow}(\omega)-\int_{0}^{t}S_{u}^{\theta,\varepsilon}(\omega)\diff H_{u}^{1,\uparrow}(\omega)\bigg\vert\\
        &\qquad\leq\bigg\vert\int_{0}^{t}\big(S_{u}^{\theta,\varepsilon}(\omega)-S_{u}^{\theta,\varepsilon,m}(\omega)\big)\diff H_{u}^{1,n,\uparrow}(\omega)\bigg\vert+\bigg\vert\int_{0}^{t}\big(S_{u}^{\theta,\varepsilon,m}(\omega)-S_{u}^{\theta,\varepsilon}(\omega)\big)\diff H_{u}^{1,\uparrow}(\omega)\bigg\vert\\
        &\qquad\quad+\bigg\vert\int_{0}^{t}S_{u}^{\theta,\varepsilon,m}(\omega)\diff H_{u}^{1,n,\uparrow}(\omega)-\int_{0}^{t}S_{u}^{\theta,\varepsilon,m}(\omega)\diff H_{u}^{1,\uparrow}(\omega)\bigg\vert\\
        &\qquad\leq 4\varepsilon\big(H_{T}^{1,\uparrow}(\omega)+\varepsilon\big)+\varepsilon.
    \end{align*}
   
   On the other hand, the finite sum $\sum_{i=1}^k\Delta S_{\tau_i}^{\theta}(\omega)\mathbbm{1}_{[\tau_i,T]}(t)$, where the $\tau_{i}$'s are the times where $\big|\Delta S^{\theta}_{\tau_i}(\omega)\big|\geq\varepsilon$ for $i=1,\ldots,k$, satisfies
    \begin{align}
        \int_{0}^{t}\bigg(\sum_{i=1}^k\Delta S_{\tau_i}^{\theta}(\omega)\mathbbm{1}_{[\tau_i,T]}(u)\bigg)\diff H_{u}^{1,n,\uparrow}(\omega)&=\sum_{i=1}^{k}\Delta S_{\tau_{i}}^{\theta}(\omega)\big(H_{t}^{1,n,\uparrow}(\omega)-H_{\tau_{i}\wedge t}^{1,n,\uparrow}(\omega)\big),\label{eq:intdeltaS:1}
    \end{align}
    and
        \begin{align}
        \int_{0}^{t}\bigg(\sum_{i=1}^k\Delta S_{\tau_i}^{\theta}(\omega)\mathbbm{1}_{[\tau_i,T]}(u)\bigg)\diff H_{u}^{1,\uparrow}(\omega)&=\sum_{i=1}^{k}\Delta S_{\tau_{i}}^{\theta}(\omega)\big(H_{t}^{1,\uparrow}(\omega)-H_{\tau_{i}\wedge t}^{1,\uparrow}(\omega)\big).\label{eq:intdeltaS:2}
    \end{align}
Here, we again exploit our definition of the stochastic integral at left jumps of the integrator (see \eqref{eq:stoch_int_up} and \eqref{eq:stoch_int_down}). Since $H_{t}^{1,n,\uparrow}(\omega)\to H_{t}^{1,\uparrow}(\omega)$ for all $t\in[0,T]$, we thus have that
    \begin{align*}
        \int_{0}^{t}\bigg(\sum_{i=1}^k\Delta S_{\tau_i}^{\theta}(\omega)\mathbbm{1}_{[\tau_i,T]}(u)\bigg)\diff H_{u}^{1,n,\uparrow}(\omega)\to\int_{0}^{t}\bigg(\sum_{i=1}^k\Delta S_{\tau_i}^{\theta}(\omega)\mathbbm{1}_{[\tau_i,T]}(u)\bigg)\diff H_{u}^{1,\uparrow}(\omega),
    \end{align*}
    as $n\to\infty$, by \eqref{eq:intdeltaS:1} and \eqref{eq:intdeltaS:2}. Together with the above, it thus follows that
    \begin{align*}
       \int_{0}^{t}S_{u}^{\theta}(\omega)\diff H_{u}^{1,n,\uparrow}(\omega)\to\int_{0}^{t}S_{u}^{\theta}(\omega)\diff H_{u}^{1,\uparrow}(\omega),\quad n\to\infty,
    \end{align*}
    for every $t\in[0,T]$, since $\varepsilon>0$ was arbitrary.
    
    Repeating the same argument as above for $H^{1,n,\downarrow}$, $H^{1,\downarrow}$ and $(1-\lambda)S^{\theta}$, we also obtain
    \begin{align*}
       \int_{0}^{t}S_{u}^{\theta}(\omega)\diff H_{u}^{1,n,\downarrow}(\omega)\to\int_{0}^{t}S_{u}^{\theta}(\omega)\diff H_{u}^{1,\downarrow}(\omega),\quad n\to\infty.
    \end{align*}
    Since $\omega$ can be chosen arbitrarily from a set with probability $1$, this proves \eqref{eq:conv_H0_decomp}.
    
    In order to prove that $H^{1}=H^{1,\uparrow}\mathbbm{1}_{\llbracket 0,\sigma_{1}\wedge\sigma_{2}\llbracket}\mathbbm{1}_{\llbracket0,\sigma_{3}\rrbracket}-H^{1,\downarrow}\mathbbm{1}_{\llbracket 0,\sigma_{1}\wedge\sigma_{2}\llbracket}\mathbbm{1}_{\llbracket0,\sigma_{3}\rrbracket}\in\mathcal{A}(x)$, we first note that all the $H^{0,\theta,n,\uparrow}$ and $H^{0,\theta,n,\downarrow}$ as defined in \eqref{eq:H_0_n_down_and_H_0_n_up} are $[0,\infty]$-valued, increasing and $\mathbb{F}$-predictable processes.
    This consequence follows from our definition of the stochastic integral and the fact that $H^{1,n,\uparrow}$ as well as $H^{1,n,\downarrow}$ are $[0,\infty]$-valued, increasing and $\mathbb{F}$-predictable processes, together with $S^{\theta}$ being strictly positive and $\mathbb{F}$-adapted. Since $H^{1,n}\in\cA(x)$ for each $n\in\mathbb{N}$, it follows from the proof of Proposition~\ref{prop:bounded_var_Ax} that $P[H_{T}^{0,\theta,n,\uparrow}<\infty]=1$ and $P[H_{T}^{0,\theta,n,\downarrow}<\infty]=1$. Moreover, for arbitrary $M\geq 0$ and $n\in\mathbb{N}$ we may write
    \begin{align*}
        P[H_{T}^{0,\theta,\uparrow}\geq M]&=P[\{H_{T}^{0,\theta,\uparrow}\geq M\}\cap\{H_{T}^{0,\theta,n,\uparrow}\geq M\}]+P[\{H_{T}^{0,\theta,\uparrow}\geq M\}\cap\{H_{T}^{0,\theta,n,\uparrow}< M\}]\\
        &\leq P[H_{T}^{0,\theta,n,\uparrow}\geq M]+P[H_{T}^{0,\theta,\uparrow}>H_{T}^{0,\theta,n,\uparrow}],
    \end{align*}
    so that letting $M\to\infty$ yields $P[H_{T}^{0,\theta,\uparrow}<\infty]\geq 1-P[H_{T}^{0,\theta,\uparrow}>H_{T}^{0,\theta,n,\uparrow}]$. However, by taking the limit as $n\to\infty$ on both sides, we obtain from equation~\eqref{eq:conv_H0_decomp} that $P[H_{T}^{0,\theta,\uparrow}<\infty]=1$. The same argumentation also yields $P[H_{T}^{0,\theta,\downarrow}<\infty]=1$. Then, from
    \begin{align*}
        &P[\{\omega\in\Omega\colon H_{t}^{0,\theta,\uparrow}(\omega)\mathbbm{1}_{\llbracket 0,\sigma_{1}\wedge\sigma_{2}\llbracket}(\omega,t)\mathbbm{1}_{\llbracket0,\sigma_{3}\rrbracket}(\omega,t)=\infty\}]\\
        &\quad=P[\{H_{t}^{0,\theta,\uparrow}\mathbbm{1}_{0\leq t<\sigma_{1}\wedge\sigma_{2}}\mathbbm{1}_{0\leq t\leq\sigma_{3}}=\infty\}\cap\{\{\sigma_{1}\wedge\sigma_{2}<\infty\}\cup\{\sigma_{3}<\infty\}\}]\\
        &\quad+P[\{H_{t}^{0,\theta,\uparrow}\mathbbm{1}_{0\leq t<\sigma_{1}\wedge\sigma_{2}}\mathbbm{1}_{0\leq t\leq\sigma_{3}}=\infty\}\cap\{\{\sigma_{1}\wedge\sigma_{2}=\infty\}\cap\{\sigma_{3}=\infty\}\}]\\
        &\quad\leq P[\{\sigma_{1}\wedge\sigma_{2}<\infty\}\cup\{\sigma_{3}<\infty\}]+P[H_{t}^{0,\theta,\uparrow}=\infty]=0,
    \end{align*}
    and similarly $P[\{\omega\in\Omega\colon H_{t}^{0,\theta,\downarrow}(\omega)\mathbbm{1}_{\llbracket 0,\sigma_{1}\wedge\sigma_{2}\llbracket}(\omega,t)\mathbbm{1}_{\llbracket0,\sigma_{3}\rrbracket}(\omega,t)=\infty\}]=0$, the limit process $H^{0,\theta}=(H_{t}^{0,\theta})_{0\leq t\leq T}$, defined via
    \begin{align*}
        &H_{t}^{0,\theta}(\omega)\\
        &\quad=x\mathbbm{1}_{\llbracket 0,\sigma_{1}\wedge\sigma_{2}\llbracket\cap\llbracket0,\sigma_{3}\rrbracket}(\omega,t)+H_{t}^{0,\theta,\uparrow}(\omega)\mathbbm{1}_{\llbracket 0,\sigma_{1}\wedge\sigma_{2}\llbracket\cap\llbracket0,\sigma_{3}\rrbracket}(\omega,t)-H_{t}^{0,\theta,\downarrow}(\omega)\mathbbm{1}_{\llbracket 0,\sigma_{1}\wedge\sigma_{2}\llbracket\cap\llbracket0,\sigma_{3}\rrbracket}(\omega,t),
    \end{align*}
    is real-valued and $\mathbb{F}$-predictable. This is a direct consequence of $\sigma_{1}$ and $\sigma_{2}$ being predictable stopping times (see Remark~IV.87(d) in \cite{dellacherie1978} together with Remark~E of the preliminary section ``Complements to Chapter~IV'' in \cite{dellacherie1982}). Its total variation $\vert H^{0,\theta}\vert=(\vert H^{0,\theta}\vert_{t})_{0\leq t\leq T}$ is $[0,\infty]$-valued, $\mathbb{F}$-predictable and $P$-a.s.~finite.
    It thus remains to check that $V^{\mathrm{liq}}(\theta,H^{1})$ satisfies the admissibility condition \eqref{eq:admissible} for all $\theta\in\Theta$. By assumption, the processes $(H^{1,n})_{n\in\mathbb{N}}$ are $x$-admissible for all $\theta\in\Theta$. Hence, by identities \eqref{eq:conv_H1_decomp} and \eqref{eq:conv_H0_decomp}, we get for every $t\in[0,T]$ and for each $\theta\in\Theta$ that $V_{t}^{\mathrm{liq}}(\theta,H^{1,n})\to V_{t}^{\mathrm{liq}}(\theta,H^{1})$ almost surely, by the continuity of the liquidation function \eqref{eq:admissible} with respect to $(H^{0,\theta}_t,H^1_t)$, so that the admissibility condition \eqref{eq:admissible} passes to the limit $H^{1}$. We thus have $H^{1}\in\mathcal{A}(x)$ and this concludes the proof.
\end{proof}

To prove Proposition \ref{prop:cc_general}, we also use the following well-known variant of Koml{\'o}s' theorem (see \cite{delbaen1994}, Lemma A1.1).
\begin{lemma}\label{lem:komlos}
    Let $(f_{n})_{n\in\mathbb{N}}$ be a sequence of $\mathbb{R}_{+}$-valued, random variables on a probability space $(\Omega,\mathcal{F},P)$. There is a sequence $g_{n}\in\mathrm{conv}(f_{n},f_{n+1},\dots)$ of convex combinations that converges almost surely to a $[0,\infty]$-valued function $g$. If $\mathrm{conv}(f_{n},f_{n+1},\dots)$ is bounded in $L^{0}(\Omega,\mathcal{F},P)$, then $g$ is finite almost surely.
\end{lemma}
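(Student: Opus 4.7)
The plan is to combine a bounded-sequence argument with a diagonal truncation. First, for each $N \in \mathbb{N}$ the truncated sequence $(f_n \wedge N)_n$ is uniformly bounded, hence bounded in $L^2(P)$, so weak sequential compactness of $L^2$-balls together with Mazur's lemma produce convex combinations with coefficients $(\alpha_{k,n}^{(N)})$ applied to $(f_k \wedge N)$ that converge almost surely to some random variable $g^{(N)} \in [0,N]$. Iterating this construction inside the previously extracted subsequence for $N = 1, 2, \dots$ and carrying out a Cantor-style diagonal extraction yields a single sequence of convex combinations $g_n \in \mathrm{conv}(f_n, f_{n+1}, \dots)$, with fixed coefficients $(\alpha_{k,n})$, such that for every $N \in \mathbb{N}$ the analogously-weighted convex combinations $\sum_k \alpha_{k,n}(f_k \wedge N)$ of $(f_k \wedge N)$ converge almost surely to some $g^{(N)}$ as $n \to \infty$.

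Since $f_k \wedge N \nearrow f_k$, monotonicity in $N$ gives $g^{(N)} \nearrow g$ for some $[0,\infty]$-valued random variable $g$. One then has to upgrade the truncated convergence to almost sure convergence of $(g_n)$ itself. The key observation is that the same convex combinations applied to $f_k$ dominate those applied to $f_k \wedge N$, which gives $\liminf_n g_n \geq g^{(N)}$ for every $N$, and hence $\liminf_n g_n \geq g$; for the reverse inequality, one controls the tail contribution $g_n - \sum_k \alpha_{k,n}(f_k \wedge N) = \sum_k \alpha_{k,n}(f_k - N)^+$ and exploits the flexibility afforded by allowing $g = +\infty$ on the set where these tails do not vanish as $N \to \infty$.

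For the second assertion, assume that $\mathrm{conv}(f_n, f_{n+1}, \dots)$ is bounded in $L^0(\Omega, \mathcal{F}, P)$: for every $\varepsilon > 0$ there is $M > 0$ with $P[h > M] < \varepsilon$ for every $h$ in this convex hull. Applying this uniform bound to the convex combinations $(g_n)$ constructed above and using $g_n \to g$ almost surely, Fatou's lemma yields $P[g > M] \leq \varepsilon$. Letting $M \to \infty$ shows $g$ is almost surely finite-valued.

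The main obstacle is the ``lifting'' step: the truncated convex combinations $\sum_k \alpha_{k,n}(f_k \wedge N)$ do not coincide with $g_n \wedge N$, so the upgrade from convergence at every truncation level to almost sure convergence of $g_n$ in $[0,\infty]$ cannot be done by a direct substitution and must be obtained via the two-sided sandwich sketched above. Care must also be taken during the diagonal extraction to guarantee that the weights $(\alpha_{k,n})$ stabilize on a single set of indices, which is standard but non-trivial because Mazur's lemma does not specify the form of the coefficients.
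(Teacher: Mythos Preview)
The paper does not prove this lemma; it simply cites Lemma~A1.1 of \cite{delbaen1994}. Your lifting step, however, contains a genuine gap that the hedge you describe does not close: convergence of $\sum_k \alpha_{k,n}(f_k \wedge N)$ for every $N$ does \emph{not} force convergence of $g_n = \sum_k \alpha_{k,n} f_k$ in $[0,\infty]$. Take the typewriter intervals $I_n \subset [0,1]$ (level $m\ge 0$ consists of the $2^m$ dyadic intervals of length $2^{-m}$, indexed by $n \in \{2^m,\dots, 2^{m+1}-1\}$) and set $f_n = n \, \mathbbm{1}_{I_n}$. With Ces\`aro weights $\alpha_{k,n} = 1/n$ for $1\le k \le n$ one has $\tfrac{1}{n}\sum_{k \le n}(f_k \wedge N) \le N (1+\log_2 n)/n \to 0$ for every $N$, so $g^{(N)} \equiv 0$ and hence your $g \equiv 0$. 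But for every $\omega\in(0,1)$, writing $k_m=2^m+\lfloor 2^m\omega\rfloor$ for the level-$m$ index with $\omega\in I_{k_m}$, one computes $g_{k_M}(\omega)\to 2$ and $g_{k_{M+1}-1}(\omega)\to 1$ as $M\to\infty$, so $g_n(\omega)$ oscillates between $1$ and $2$ and converges neither to $0$ nor to $+\infty$. Your diagonal extraction does not exclude such coefficients---after the diagonal, the only property you invoke is convergence at every truncation level---and redefining $g$ to be $+\infty$ where the tails $\sum_k \alpha_{k,n}(f_k-N)^+$ persist cannot help, since $\liminf_n g_n$ is finite there.

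The argument in \cite{delbaen1994} avoids truncation altogether: fix a bounded strictly concave increasing $\varphi\colon[0,\infty]\to[0,1]$, say $\varphi(x)=1-e^{-x}$, pick $g_n \in \mathrm{conv}(f_n,f_{n+1},\dots)$ nearly maximising $\E[\varphi(\cdot)]$ over that convex hull, and use the identity $\varphi\big(\tfrac{x+y}{2}\big)-\tfrac{1}{2}\big(\varphi(x)+\varphi(y)\big)=\tfrac{1}{2}\big(e^{-x/2}-e^{-y/2}\big)^2$ to deduce that $(e^{-g_n/2})$ is Cauchy in $L^2$; hence $g_n$ converges in probability in $[0,\infty]$, and an almost surely convergent subsequence (relabelled so that its $n$-th term lies in $\mathrm{conv}(f_n,\dots)$) finishes the first assertion. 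Your argument for the second assertion, via the uniform $L^0$-bound and Fatou, is correct.
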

We further need the next simple fact about the measurability of limits of measurable functions (see, e.g., Lemma 3.5 in \cite{williams1991}).
\begin{lemma}\label{lem:meas}
    Let $(f_{n})_{n\in\mathbb{N}}$ be a sequence of measurable functions on a measure space $(\Omega,\mathcal{F})$. Then, $\liminf_{n\to\infty}f_n$ and $\limsup_{n\to\infty}f_n$ are $[-\infty,\infty]$-valued measurable functions and the set $F:=\left\{\omega\in\Omega: \text{$f_n(\omega)$ converges to a limit in $\mathbb{R}$}\right\}$ is $\mathcal{F}$-measurable and given by
    \begin{align*}
        F={}&\left\{\omega\in\Omega: \liminf_{n\to\infty}f_n(\omega)=\limsup_{n\to\infty}f_n(\omega)\in\mathbb{R}\right\}.
    \end{align*}
    Moreover, if $\liminf_{n\to\infty}f_n$ and $\limsup_{n\to\infty}f_n$ are measurable with respect to a sub-$\sigma$-algebra $\mathcal{G}\subseteq\mathcal{F}$, then $F\in\mathcal{G}$.
\end{lemma}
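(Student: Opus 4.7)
The plan is straightforward and essentially a standard measure-theoretic exercise, so I would proceed in three small steps.

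First, I would establish the measurability of $\liminf_{n\to\infty} f_n$ and $\limsup_{n\to\infty} f_n$ via the standard fact that countable suprema and infima of $[-\infty,\infty]$-valued measurable functions are measurable: $\{\sup_n f_n \leq a\} = \bigcap_n \{f_n \leq a\}$ and symmetrically for the infimum. Writing $\liminf_{n\to\infty} f_n = \sup_n \inf_{k\geq n} f_k$ and $\limsup_{n\to\infty} f_n = \inf_n \sup_{k\geq n} f_k$ then yields the claim immediately.

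Second, I would justify the pointwise characterisation of $F$. This amounts to the elementary fact that a sequence of extended real numbers converges to a finite limit in $\mathbb{R}$ if and only if its $\liminf$ and $\limsup$ coincide and the common value lies in $\mathbb{R}$. The forward implication is immediate from the definition of the limit; the converse follows from the pinching $\liminf_{n\to\infty} f_n \leq f_n \leq \limsup_{n\to\infty} f_n$ modulo a vanishing error.

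The third step is the only (mild) technical subtlety and will be the main obstacle: showing measurability of the resulting set on the extended real line, where the naive reduction via the difference $g-h$ is ambiguous at $\pm\infty$. I would circumvent this by using, for two $[-\infty,\infty]$-valued $\mathcal{F}$-measurable functions $g, h$, the identity $\{g \leq h\} = \bigcap_{q\in\mathbb{Q}}(\{g < q\} \cup \{h \geq q\})$, which is verified by inserting a rational between $h(\omega)$ and $g(\omega)$ whenever $g(\omega) > h(\omega)$. Then $\{g=h\} = \{g\leq h\}\cap\{h\leq g\}$ is measurable, and combined with the measurability of $\{|\liminf_{n\to\infty} f_n| < \infty\} = \bigcup_k\{-k \leq \liminf_{n\to\infty} f_n \leq k\}$, this gives
\begin{align*}
    F = \{\liminf_{n\to\infty} f_n = \limsup_{n\to\infty} f_n\} \cap \{|\liminf_{n\to\infty} f_n| < \infty\} \in \mathcal{F}.
\end{align*}
For the final ``moreover'' assertion, I would simply observe that the argument just described uses only measurability of $\liminf_{n\to\infty} f_n$ and $\limsup_{n\to\infty} f_n$ with respect to the ambient $\sigma$-algebra, so replacing $\mathcal{F}$ by $\mathcal{G}$ throughout yields $F \in \mathcal{G}$ with no further work.
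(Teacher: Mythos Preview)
Your proposal is correct; the paper does not actually prove this lemma but simply cites it as a standard fact (Lemma~3.5 in \cite{williams1991}), so your argument supplies precisely the elementary details that the paper omits. One minor imprecision: the ``pinching'' inequality $\liminf_n f_n \leq f_n \leq \limsup_n f_n$ does not hold termwise, so you should phrase the converse in Step~2 via the definitions (for every $\varepsilon>0$ eventually $\inf_{k\geq n}f_k > L-\varepsilon$ and $\sup_{k\geq n}f_k < L+\varepsilon$), but this is exactly what your ``modulo a vanishing error'' is gesturing at.
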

We are now ready to prove Proposition \ref{prop:cc_general}. Here, the main difference to the initial paper by \citet{chau2019} (see Lemma~A.1) is the treatment of jump times. For this reason, we revisit the proof of \cite{campi2006}, Proposition 3.4\footnote{Note that, since we are already considering the liquidation value in \eqref{eq:primalproblem}, we do not need to assume that $H^1_T=0$. Therefore, we also do not need to assume that $\mathcal{F}_{T}=\mathcal{F}_{T-}$ and $S_{T}^{\theta}=S_{T-}^{\theta}$ as in Remark 4.2 in \cite{campi2006}.}. Our key insight here is that we can show that the set where the convergence can fail, is the same for all models. It can again be exhausted by countably many stopping times. However, since the filtration $\mathbb{F}$ is not assumed to be complete, the treatment of jump times needs special care.
\begin{proof}[Proof of Proposition \ref{prop:cc_general}]
    Fix $x>0$ and let $(H^{1,n})_{n\in\mathbb{N}}\subseteq\mathcal{A}(x)$ be a sequence of admissible, self-financing strategies. In particular, $H^{1,n}=(H_{t}^{1,n})_{0\leq t\leq T}$ is a real-valued, $\mathbb{F}$-predictable stochastic process, whose total variation $\vert H^{1,n}\vert_{T}$ is $[0,\infty]$-valued, $\mathbb{F}$-predictable and satisfies $P[\vert H^{1,n}\vert_{T}<\infty]=1$,
    and $V^{\mathrm{liq}}(\theta,H^{1,n})$ satisfies \eqref{eq:admissible} for all $\theta\in\Theta$. As above, we decompose these processes canonically as $H_{t}^{1,n}(\omega)=H_{t}^{1,n,\uparrow}(\omega)-H_{t}^{1,n,\downarrow}(\omega)$ for almost every $\omega$ for every $t\in[0,T]$, with $H^{1,n,\uparrow}$ and $H^{1,n,\downarrow}$ both being $[0,\infty]$-valued, $\mathbb{F}$-predictable and satisfying $P[H^{1,n,\uparrow}_{T}<\infty]=P[H^{1,n,\downarrow}_{T}<\infty]=1$.
    By Proposition~\ref{prop:bounded_var_Ax} we know that $(H_{T}^{1,n,\uparrow})_{n\in\mathbb{N}}$ and $(H_{T}^{1,n,\downarrow})_{n\in\mathbb{N}}$ as well as their convex combinations are bounded in $L^{0}(\Omega,\mathcal{F},P)$. Hence, let $D\coloneqq\big([0,T]\cap\mathbb{Q}\big)\cup\{T\}$ and use Lemma~\ref{lem:komlos} together with a diagonalisation procedure to obtain sequences of convex weights $\alpha_{n}^{j}$ such that for
    \begin{align*}
        \widetilde{H}_{t}^{1,n,\uparrow}=\sum_{j\geq 1}\alpha_{n}^{j}H_{t}^{1,n+j-1,\uparrow},\quad\widetilde{H}_{t}^{1,n,\downarrow}=\sum_{j\geq 1}\alpha_{n}^{j}H_{t}^{1,n+j-1,\downarrow},\quad t\in D,
    \end{align*}
    there exist $\mathcal{F}_{t}$-measurable random variables $\widetilde{H}_{t}^{1,\uparrow}$ and $\widetilde{H}_{t}^{1,\downarrow}$, such that
    \begin{align}\label{eq:komlos_conv_down}
        \widetilde{H}_{t}^{1,n,\uparrow}\to\widetilde{H}_{t}^{1,\uparrow},\quad\widetilde{H}_{t}^{1,n,\downarrow}\to\widetilde{H}_{t}^{1,\downarrow},\quad \forall t\in D,
    \end{align}
    almost surely. We denote by $\widetilde{\Omega}_{0}$ the event where \eqref{eq:komlos_conv_down} holds true so that $P[\widetilde{\Omega}_{0}]=1$. Observe now that $q\mapsto\widetilde{H}_{q}^{1,\uparrow}(\omega)$ is  non-negative and non-decreasing over $D$ for all $\omega\in\widetilde{\Omega}_{0}$. Now, the $\mathbb{F}$-stopping time $\sigma$, defined as
    \begin{align*}
        \sigma\coloneqq\inf\bigg\{r\in\mathbb{Q}\colon\sup_{q\leq r,\>q\in D}\widetilde{H}_{q}^{1,\uparrow}=\infty\bigg\},
    \end{align*}
    satisfies $P[\sigma=\infty]=1$, and we define
    \begin{align}\label{prop:pr:conv}
        \bar{H}_{t}^{1,\uparrow}(\omega)=\lim_{\substack{q\downarrow\downarrow t,\, q\in\mathbb{Q}}}\widetilde{H}_{q}^{1,\uparrow}(\omega)\quad\text{if}\quad 0\leq t<\sigma(\omega),\qquad\bar{H}_{t}^{1,\uparrow}(\omega)=0\quad\text{if}\quad t\geq\sigma(\omega),
    \end{align}
    for all $t\in[0,T)$, and
    \begin{align*}
        \bar{H}_{T}^{1,\uparrow}(\omega)=\widetilde{H}_{T}^{1,\uparrow}(\omega)\quad\text{if}\quad T<\sigma(\omega),\qquad\bar{H}_{T}^{1,\uparrow}(\omega)=0\quad\text{if}\quad T\geq\sigma(\omega).
    \end{align*}
    Note that by Lemma~\ref{lem:meas} and the right continuity of the filtration, the process $\bar{H}^{1,\uparrow}$ obtained in this way is right continuous and $\mathbb{F}$-adapted.
    Indeed, for fixed $t\in[0,T)$, we have that $\liminf_{\substack{q\downarrow\downarrow t,\, q\in\mathbb{Q}}}\widetilde{H}_{q}^{1,\uparrow}$ and $\limsup_{\substack{q\downarrow\downarrow t,\, q\in\mathbb{Q}}}\widetilde{H}_{q}^{1,\uparrow}$ are $\mathcal{F}_t$-measurable by the right-continuity of the filtration. By Lemma~\ref{lem:meas}, we have
    \begin{align*}
        F:={}&\left\{\omega\in\Omega: \bar{H}_{t}^{1,\uparrow}(\omega)=\lim_{\substack{q\downarrow\downarrow t,\, q\in\mathbb{Q}}}\widetilde{H}_{q}^{1,\uparrow}(\omega)\right\}\\
        ={}&\left\{\omega\in\Omega: \liminf_{\substack{q\downarrow\downarrow t,\, q\in\mathbb{Q}}}\widetilde{H}_{q}^{1,\uparrow}(\omega)=\limsup_{\substack{q\downarrow\downarrow t,\, q\in\mathbb{Q}}}\widetilde{H}_{q}^{1,\uparrow}(\omega)\in\mathbb{R}\right\}\in\mathcal{F}_t,
    \end{align*}
    and hence $P[F]=1$, since $\bar{\Omega}_{0}\coloneqq\widetilde{\Omega}_0\cap\{\sigma=\infty\}$ satisfies  $\bar{\Omega}_{0}\subseteq F$ and $P[\bar{\Omega}_{0}]=1$.

    We now claim that if $(\omega,t)\in\bar{\Omega}_{0}\times(0,T)$ is such that $t$ is a continuity point of the function $s\mapsto \bar{H}_{s}^{1,\uparrow}(\omega)$, then $\widetilde{H}_{t}^{1,n,\uparrow}(\omega)\to \bar{H}_{t}^{1,\uparrow}(\omega)$. Indeed, for $\varepsilon>0$ let $q_{1}<t<q_{2}$ be rational numbers such that $\bar{H}_{q_{2}}^{1,\uparrow}(\omega)-\bar{H}_{q_{1}}^{1,\uparrow}(\omega)<\varepsilon$. From \eqref{eq:komlos_conv_down}, there exists $N=N(\omega)\in\mathbb{N}$ such that
    \begin{align*}
        \big\vert\widetilde{H}_{q_{1}}^{1,n,\uparrow}(\omega)-\bar{H}_{q_{1}}^{1,\uparrow}(\omega)\big\vert<\varepsilon,\quad\big\vert\widetilde{H}_{q_{2}}^{1,n,\uparrow}(\omega)-\bar{H}_{q_{2}}^{1,\uparrow}(\omega)\big\vert<\varepsilon,\quad\forall n\geq N.
    \end{align*}
    We then estimate, for all $n\geq N$,
    \begin{align*}
        \big\vert\widetilde{H}_{q_{2}}^{1,n,\uparrow}(\omega)-\widetilde{H}_{q_{1}}^{1,n,\uparrow}(\omega)\big\vert&\leq\big\vert\widetilde{H}_{q_{2}}^{1,n,\uparrow}(\omega)-\bar{H}_{q_{2}}^{1,\uparrow}(\omega)\big\vert+\big\vert \bar{H}_{q_{2}}^{1,\uparrow}(\omega)-\bar{H}_{q_{1}}^{1,\uparrow}(\omega)\big\vert\\[0.5em]
        &\quad+\big\vert \bar{H}_{q_{1}}^{1,\uparrow}(\omega)-\widetilde{H}_{q_{1}}^{1,n,\uparrow}(\omega)\big\vert\\[0.5em]
        &< 3\varepsilon.
    \end{align*}
    Therefore, using monotonicity of $\widetilde{H}^{1,n,\uparrow}$, we obtain for all $n\geq N(\omega)$ that
    \begin{align*}
        \big\vert\widetilde{H}_{t}^{1,n,\uparrow}(\omega)-\bar{H}_{t}^{1,\uparrow}(\omega)\big\vert&\leq\big\vert\widetilde{H}_{t}^{1,n,\uparrow}(\omega)-\widetilde{H}_{q_{2}}^{1,n,\uparrow}(\omega)\big\vert+\big\vert\widetilde{H}_{q_{2}}^{1,n,\uparrow}(\omega)-\bar{H}_{q_{2}}^{1,\uparrow}(\omega)\big\vert\\[0.5em]
        &\quad+\big\vert \bar{H}_{q_{2}}^{1,\uparrow}(\omega)-\bar{H}_{t}^{1,\uparrow}(\omega)\big\vert\\[0.5em]
        &\leq\big\vert\widetilde{H}_{q_{1}}^{1,n,\uparrow}(\omega)-\widetilde{H}_{q_{2}}^{1,n,\uparrow}(\omega)\big\vert+\big\vert\widetilde{H}_{q_{2}}^{1,n,\uparrow}(\omega)-\bar{H}_{q_{2}}^{1,\uparrow}(\omega)\big\vert\\[0.5em]
        &\quad+\big\vert \bar{H}_{q_{2}}^{1,\uparrow}(\omega)-\bar{H}_{q_{1}}^{1,\uparrow}(\omega)\big\vert\\[0.5em]
        &< 5\varepsilon.
    \end{align*}
    For $t=T$, the convergence of $\widetilde{H}_{T}^{1,n,\uparrow}(\omega)\to \bar{H}_{T}^{1,\uparrow}(\omega)$ follows from \eqref{eq:komlos_conv_down} and the identity $\bar{H}_{T}^{1,\uparrow}=\widetilde{H}_{T}^{1,\uparrow}$ on $\bar{\Omega}_{0}$.

    The process $\bar{H}^{1,\uparrow}$ is not yet the desired limit because we still have to ensure the convergence at the jumps times of $\bar{H}^{1,\uparrow}$. Since $\bar{H}^{1,\uparrow}$ is right continuous and $\mathbb{F}$-adapted, there exists a sequence $(\tau_{k})_{k\in\mathbb{N}}$ of $[0,T]\cup\{\infty\}$-valued $\mathbb{F}$-stopping times exhausting the jumps (and the ``explosions'') of the process $\bar{H}^{1,\uparrow}$. This argument uses Theorem~D together with Remark~E in the preliminary section ``Complements to Chapter~IV'' of \cite{dellacherie1982}.
    Hence, by passing once more to convex combinations, we may also assume that $(\widetilde{H}_{\tau_{k}}^{1,n,\uparrow})$ converges almost surely on $\{\tau_{k}\leq T\}$ for every $k\in\mathbb{N}$. We can therefore set
    \begin{align}
        \widehat{\Omega}_{0}\coloneqq\Big\{\omega\in\bar{\Omega}_{0}\colon\widetilde{H}_{\tau_{k}(\omega)}^{1,n,\uparrow}(\omega)\text{ converges to a limit in $\mathbb{R}$ for all }k\Big\},\label{prop:pr:conv2}
    \end{align}
    and still have $P[\widehat{\Omega}_{0}]=1$. In particular, we again apply Lemma~\ref{lem:komlos} together with a diagonalisation procedure, but this time on the already constructed sequence $(\widetilde{H}^{1,n,\uparrow})$ and for the countably many stopping times $(\tau_{k})_{k\in\mathbb{N}}$. By still denoting this sequence as $(\widetilde{H}^{1,n,\uparrow})$, this construction yields a subset $\widehat{\Omega}_{0}\subseteq\bar{\Omega}_{0}$ of full measure, where the convergence of $\widetilde{H}_{t}^{1,n,\uparrow}$ to a limit in $\mathbb{R}$ holds true for all $t\in[0,T]$. Indeed, for $\omega\in\widehat{\Omega}_{0}$, the convergence in \eqref{prop:pr:conv2} together with \eqref{prop:pr:conv} implies that $\widetilde{H}_{t}^{1,n,\uparrow}(\omega)$ converges to a limit in $\R$ for all $t\in[0,T].$ Finally, we define $H^{1,\uparrow}$ by setting $H^{1,\uparrow}_t(\omega)=\lim_{n\to\infty}\widetilde{H}_{t}^{1,n,\uparrow}(\omega)$ on the set
    $$G:=\left\{(\omega,t)\in\Omega\times[0,T]:\text{ $\widetilde{H}^{1,n,\uparrow}_t(\omega)$ converges to a limit in $\mathbb{R}$}\right\},$$
    and $H^{1,\uparrow}_t(\omega)=0$ on $G^c$. The set $\Omega_{G}\coloneqq\{\omega\in\Omega\colon\forall t\in[0,T],\>(\omega,t)\in G\}$ has full measure, since $\widehat{\Omega}_{0}\subseteq\Omega_{G}$ and $P[\widehat{\Omega}_{0}]=1$. This procedure yields an $\mathbb{F}$-predictable process $H^{1,\uparrow}$ by Lemma~\ref{lem:meas}, since the processes $\widetilde{H}^{1,n,\uparrow}$ are $\mathbb{F}$-predictable. Moreover, since for $\omega\in\widehat{\Omega}_{0}$ we have that $\widetilde{H}^{1,n,\uparrow}_t(\omega)$ converges for all $t\in[0,T]$, and the mapping $t\mapsto\widetilde{H}^{1,n,\uparrow}_t(\omega)$ is non-negative and non-decreasing for all $n$, we have that $t\mapsto H^{1,\uparrow}_t(\omega)$ is non-negative and non-decreasing for all $\omega\in\Omega_0\subseteq\widehat{\Omega}_0$ with $P[\Omega_0]=1$. Furthermore, it holds that $P[H_{T}^{1,\uparrow}<\infty]=1$.

    The case $H^{1,\downarrow}$ is treated analogously. In particular, we obtain two processes, $H^{1,\uparrow}$ and $H^{1,\downarrow}$, both $[0,\infty]$-valued, $\mathbb{F}$-predictable, $P$-a.s.~increasing, satisfying $P[H_{T}^{1,\uparrow}<\infty]=1$ and $P[H_{T}^{1,\downarrow}<\infty]=1$,
    such that
    \begin{align*}
        P\big[\widetilde{H}_{t}^{1,n,\uparrow}\to H_{t}^{1,\uparrow},\ \forall t\in[0,T]\big]=1,\quad\text{and}\quad P\big[\widetilde{H}_{t}^{1,n,\downarrow}\to H_{t}^{1,\downarrow},\ \forall t\in[0,T]\big]=1.
    \end{align*}

    To conclude the proof, define the process $H^{1}=(H_{t}^{1})_{0\leq t\leq T}$ as
    \begin{align}\label{eq:limit_process_H1}
        H_{t}^{1}(\omega)=H_{t}^{1,\uparrow}(\omega)\mathbbm{1}_{\llbracket 0,\sigma_{1}\wedge\sigma_{2}\llbracket}(\omega,t)\mathbbm{1}_{\llbracket 0,\sigma_{3}\rrbracket}(\omega,t)-H_{t}^{1,\downarrow}(\omega)\mathbbm{1}_{\llbracket 0,\sigma_{1}\wedge\sigma_{2}\llbracket}(\omega,t)\mathbbm{1}_{\llbracket 0,\sigma_{3}\rrbracket}(\omega,t),
    \end{align}
    where the stopping times $\sigma_{1}$, $\sigma_{2}$ and $\sigma_{3}$ are defined as
    \begin{align*}
        \sigma_{1}&\coloneqq\inf\{t>0\colon H_{t-}^{1,\uparrow}=\infty\text{ or }H_{t-}^{1,\downarrow}=\infty\},\\
        \sigma_{2}&\coloneqq\inf\{t>0\colon \Delta H_{t}^{1,\uparrow}=\infty\text{ or }\Delta H_{t}^{1,\downarrow}=\infty\},\\
        \sigma_{3}&\coloneqq\inf\{t>0\colon \Delta_{+}H_{t}^{1,\uparrow}=\infty\text{ or }\Delta_{+}H_{t}^{1,\downarrow}=\infty\}.
    \end{align*}
    The stopping times $\sigma_{1}$ and $\sigma_{2}$ are predictable (see Remark~IV.87(d) in \cite{dellacherie1978} together with Remark~E of the preliminary section ``Complements to Chapter~IV'' in \cite{dellacherie1982}). Hence, the process $H^{1}$ is real-valued and $\mathbb{F}$-predictable. Moreover, $H^{1,\uparrow}$ and $H^{1,\downarrow}$ both are $P$-a.s.~increasing, and because $P[H_{T}^{1,\uparrow}<\infty]=1$ as well as $P[H_{T}^{1,\downarrow}<\infty]=1$, it holds that $P[\sigma_{1}=\infty]=P[\sigma_{2}=\infty]=P[\sigma_{3}=\infty]=1$. The total variation $\vert H^{1}\vert=(\vert H^{1}\vert_{t})_{0\leq t\leq T}$ of $H^{1}$ satisfies $\vert H^{1}\vert_{t}<\infty$ for all $\{0\leq t<\sigma_{1}\wedge\sigma_{2}\}\cap\{0\leq t\leq\sigma_{3}\}$ and $\vert H^{1}\vert_{t}=\infty$ for all $\{t\geq\sigma_{1}\wedge\sigma_{2}\}\cup\{t>\sigma_{3}\}$, which implies that $\vert H^{1}\vert$ is $[0,\infty]$-valued, $\mathbb{F}$-predictable and $P$-a.s.~finite.
    It remains to check that $H^{1}\in\mathcal{A}(x)$. By construction, the processes $(\widetilde{H}^{1,n})_{n\in\mathbb{N}}$, defined via the decomposition $\widetilde{H}^{1,n}=\widetilde{H}^{1,n,\uparrow}-\widetilde{H}^{1,n,\downarrow}$, are $x$-admissible for all $\theta\in\Theta$. In particular, the process $H^{1}$ and the sequence $(\widetilde{H}^{1,n})_{n\in\mathbb{N}}\subseteq\mathcal{A}(x)$ satisfy \eqref{eq:conv_H1_decomp}, so Lemma~\ref{lem:conint} implies that $H^{1}\in\mathcal{A}(x)$. This finishes the proof.
\end{proof}
\end{appendices}


\end{document}